\newcommand{\R}{\mathbb{R}}
\newcommand{\subparagraph}{}
\theoremstyle{remark}
\newtheorem{remark}{Remark}
\newtheorem{lem}{Lemma}
\newtheorem{defn}{Definition}
\begin{document}
	\title{Optimal Resource Allocation for Multi-user OFDMA-URLLC MEC Systems}
\author{\IEEEauthorblockN{Walid R. Ghanem,Vahid Jamali, and Robert Schober}
	\thanks{This paper will be presented in part at IEEE GLOBECOM 2020 \cite{gha4}.}
	\thanks{The authors are with the Institute for Digital Communications, Friedrich-Alexander-University Erlangen-N\"urnberg (FAU), Germany (email: \{walid.ghanem, vahid.jamali, and robert.schober\}@fau.de).}}
\maketitle
\vspace{-2cm}
\begin{abstract}
In this paper, we study resource allocation algorithm design for multi-user orthogonal frequency division multiple access (OFDMA) ultra-reliable low latency communication (URLLC) in mobile edge computing (MEC) systems. To meet the stringent end-to-end delay and reliability requirements of URLLC MEC systems, we propose joint uplink-downlink resource allocation and finite blocklength transmission. Furthermore, we employ a partial time overlap between the uplink and downlink frames to minimize the end-to-end delay, which introduces a new time causality constraint. The proposed resource allocation algorithm is formulated as an optimization problem for minimization of the total weighted power consumption of the network under a constraint on the number of URLLC user bits computed within the maximum allowable computation time, i.e., the end-to-end delay of a computation task.  Despite the non-convexity of the formulated optimization problem, we develop a globally optimal solution using  a branch-and-bound approach based on discrete monotonic optimization theory. The branch-and-bound algorithm minimizes an upper bound on the total power consumption until convergence to the globally optimal value. Furthermore, to strike a balance between computational complexity and performance, we propose two efficient suboptimal algorithms based on successive convex approximation and second-order cone techniques. Our simulation results reveal that the proposed resource allocation algorithm design facilitates URLLC in MEC systems, and yields significant power savings compared to three baseline schemes. Moreover, our simulation results show that the proposed suboptimal algorithms offer different trade-offs between performance and complexity and attain a close-to-optimal performance at comparatively low complexity.
\end{abstract}
\section{Introduction} 
Future wireless communication networks target several objectives including high data rates, reduced latency, and massive device connectivity. One important objective is to facilitate ultra-reliable low latency communication (URLLC). URLLC is crucial for mission-critical applications such as remote surgery, factory automation, autonomous driving, tactile Internet, and augmented reality to enable real-time machine-to-machine and human-to-machine interaction \cite{popski}. URLLC imposes strict quality-of-service (QoS) constraints including a very low latency (e.g., 1 $\textrm{ms}$) and a low packet error probability (e.g., $10^{-6}$). 

Recently, significant attention has been devoted to studying and developing resource allocation algorithms for URLLC. In particular, optimal power allocation in a multi-user time division multiple access (TDMA) URLLC system was considered in \cite{optimal,convexfinite}. Moreover, resource allocation for orthogonal frequency division multiple access (OFDMA)-URLLC systems was studied in \cite{chsejoint,ghanem1,gha,cellurllc,darbi1}. In \cite{Rensecurrlc,Ghan3}, resource allocation for secure URLLC was investigated. However, the resource allocation schemes in \cite{optimal,convexfinite,ghanem1,gha,cellurllc,darbi1,Rensecurrlc,Ghan3} focused only on communication while computation was not considered. Nevertheless, devices in mission-critical applications are expected to  generate tasks that require computation within a given time. This motivates the investigation of resource allocation algorithm design for efficient computation in URLLC systems.

A promising solution to enable efficient and fast computation for URLLC devices is mobile edge computing (MEC). MEC can enhance the battery lifetime and reduces the power consumption of users with delay-sensitive computation tasks\cite{dynamic}. By offloading these tasks to nearby MEC servers, the power consumption and computation time at the local users can be considerably reduced at the expense of the power required for data transmission for offloading\cite{dynamic}. Thus, careful resource allocation is paramount for MEC to ensure the efficient use of the available resources (e.g., power and bandwidth) while guaranteeing a maximum delay for the computation tasks. Existing resource allocation algorithms for MEC systems, such as \cite{energymec,Zhoumec,edge,ata5a}, are based on Shannon's capacity formula. In particular,  the authors of \cite{energymec,edge} studied energy-efficient resource allocation for MEC, while computation rate maximization was targeted in \cite{Zhoumec}. However, if the resource allocation design for URLLC MEC systems is based on Shannon's capacity formula, the reliability of the offloading and downloading processes cannot be guaranteed because of the imposed delay constraints. To overcome this issue, recent works applied finite blocklength transmission (FBT)\cite{Polyanskiy} for resource allocation algorithm design for URLLC MEC systems. In particular, the authors in \cite{9048917} studied binary offloading in single-carrier TDMA systems. However, single-carrier systems suffer from poor spectrum utilization and require complex equalization at the receiver. In \cite{DeepMEC}, the authors investigated the minimization of the normalized energy consumption of an OFDMA-URLLC MEC systems. However, the algorithm proposed in \cite{DeepMEC} assumes that the channel gains of different sub-carriers are identical which may not be a realistic assumption for broadband wireless channels. Moreover, the resource allocation algorithms proposed in \cite{DeepMEC} are based on a simplified version of the general expression for the achievable rate for FBT \cite{Polyanskiy}. Furthermore, the existing MEC designs, such as \cite{energymec,uavmec}, do not take into account the size of the computation result of the tasks and do not consider the communication resources consumed for downloading of the processed data by the users. Nevertheless, the size of the processed data can be large for applications such as augmented reality.

 We note that most resource allocation algorithms proposed for URLLC systems in the literature, such as \cite{9088229,Rensecurrlc,darbi1,ghanem1}, are strictly suboptimal. In particular, the algorithms developed in \cite{9088229,Rensecurrlc} were based on block coordinate descent techniques, while those in \cite{darbi1,ghanem1} employed successive convex approximation (SCA). As a result, the performance of the resource allocation algorithms in \cite{9088229,Rensecurrlc,darbi1,ghanem1} cannot be guaranteed because the gap between the optimal and  suboptimal solutions is not known. To cope with this problem, in our recent work \cite{gha}, we proposed a global optimal algorithm based on the polyblock outer approximation method using monotonic optimization. However, the polyblock algorithm may suffer from  slow convergence for large problem sizes. To overcome this problem, in this paper, a branch-and-bound algorithm is proposed. Different from the general branch-and-bound algorithms proposed for non-convex problems, e.g., \cite{branch}, the proposed branch-and-bound algorithm exploits the monotonicity of the problem to reduce the search space for faster convergence\cite{biofast}.
 
In this paper, we study optimal joint uplink-downlink resource allocation for OFDMA-URLLC MEC systems. The main contributions of this paper are as follows:    
\begin{itemize}
	\item We propose a novel joint uplink-downlink resource allocation algorithm design for multi-user OFDMA-URLLC MEC systems. To reduce the end-to-end delay of uplink and downlink transmission while efficiently exploiting the available spectrum, we propose a partial time overlap between the uplink and downlink frames and introduce corresponding causality constraints. Then, the resource allocation algorithm design is formulated as an optimization problem for the minimization of the total weighted power consumed by the base station (BS) and the users subject to QoS constraints for the URLLC users. The QoS constraints include the required number of bits computed within a maximum allowable time, i.e., the maximum end-to-end delay of the users. 
	\item  The formulated optimization problem is a non-convex mixed-integer problem which is difficult to solve. Thus, we transform the problem into the canonical form of a discrete monotonic optimization problem. This reformulation allows the application of the branch-and-bound algorithm to find the global optimal solution. The proposed branch-and-bound algorithm searches for a global optimal solution by successively partitioning the non-convex feasible region and  using bounds on the objective function to discard inferior partition elements.
	\item To strike a balance between computational complexity and performance, we develop two efficient low-complexity suboptimal algorithms based on SCA and second-order cone programming (SOC).
	\item Our simulations show that the proposed suboptimal algorithms offer different trade-offs between complexity and performance and closely approach the performance of the optimal algorithm, despite their significantly lower complexity.  Furthermore, the proposed algorithms achieve significant performance gains compared to three baseline schemes.
\end{itemize}

We note that this paper expands the corresponding conference version \cite{gha4} in several directions. First, the formulated optimization problem targets joint local computing and edge offloading, while only  edge offloading was considered in \cite{gha4}. Second, we derive the \textit{optimal} resource allocation policy for OFDMA-URLLC MEC systems, whereas only a suboptimal algorithm was provided in [1]. Thirdly, we propose a second suboptimal algorithm to further reduce the complexity of the  suboptimal scheme proposed in \cite{gha4}. 

The remainder of this paper is organized as follows. In
Section II, we present the considered system and channel models. In Section III, the proposed resource allocation problem is formulated. In Section IV, the optimal resource allocation algorithm is derived, whereas low-complexity suboptimal algorithms are provided in Section V. In Section VI, the performance of the proposed schemes is evaluated via computer simulations, and finally conclusions are drawn in Section VII.

\textit{Notation}: Lower-case letters $x$ refer to scalar numbers, and bold lower-case letters $\mathbf{x}$ represent vectors. $(\cdot)^{T}$ denotes the transpose operator. 
$\mathbb{R}^{N \times 1}$ represents the set of all $N \times 1$ vectors with real valued entries. The circularly symmetric complex Gaussian distribution with mean $\mu$ and variance $\sigma^{2}$ is denoted by $\mathcal{CN}(\mu,\sigma^{2})$, $\sim$ stands for ``distributed as", and $\mathcal{E}\{\cdot\}$ denotes statistical expectation. $\nabla_{\mathbf{x}}f(\mathbf{x})$ denotes the gradient vector of function $f(\mathbf{x})$ and its elements are the partial derivatives of $f(\mathbf{x})$. For any two vectors $\mathbf{x}$, $\mathbf{y}$ $\in$ $\R_{+}$, $\mathbf{x} \leq \mathbf{y}$ means $x_{i}\leq y_{i}$, $\forall i,$ where $x_{i}$ and $y_{i}$ are the $i$-th elements of $\mathbf{x}$ and $\mathbf{y}$, respectively. $\mathbf{x}^{*}$ denotes the optimal value of an optimization variable $\mathbf{x}$.\vspace{-0.2cm}
\section{System and Channel Models}
In this section, we present the system and channel models for the considered OFDMA-URLLC MEC system.
\subsection{System Model}
We consider a single-cell multi-user MEC system which comprises a BS and $K$ URLLC users indexed by $k =\{1,\dots,K\}$, cf. Fig.~\ref{model}. All transceivers have single antennas. The system employs frequency division duplex (FDD)\footnote{In FDD systems, different frequency bands are assigned to uplink and downlink.}. Thereby, the total bandwidth $W$ is divided into two bands for uplink and downlink transmission having bandwidths $W^{u}$ and $W^{d}$, respectively. The bandwidths for uplink and downlink transmission are further divided into $M^{u}$ and $M^{d}$ orthogonal sub-carriers indexed by $m^{u} =\{1,\dots,M^{u}\}$ and $m^{d} =\{1,\dots,M^{d}\}$, respectively. The bandwidth of each sub-carrier is $B_{s}$, leading to a symbol duration of $T_{s}=\frac{1}{B_{s}}$. The uplink and downlink frames are divided into $N^{u}$ time slots indexed by $n^{u} =\{1,\dots,N^{u}\}$ and $N^{d}$ time slots indexed by $n^{d} =\{1,\dots,N^{d}\}$, respectively. Moreover, each time slot contains one orthogonal
frequency division multiplexing (OFDM) symbol. Each user has one computation task ($B_{k}$, $D_{k}$) that needs to be processed, where $B_{k}$ is the task size in bits and $D_{k}$ is the time required for computation in time slots. Moreover, we assume a binary offloading scheme, where a task is executed as a whole either locally at the URLLC user or remotely at the MEC server. For task offloading, the user sends the task in the uplink and the edge server computes the task and sends the results back to the user in the downlink. There is an offset of $\tau$ time slots between downlink and uplink transmission. Thus, uplink and downlink transmission overlap in $\bar{O}=N^{u}-\tau$ time slots. The value of $\tau$ is a design parameter. On the one hand, if $\tau$ is chosen too small, the users' tasks may have not yet been computed when the downlink frame ends and hence the downlink resource is wasted. On the other hand, if $\tau$ is chosen too large, the computed bits at the BS have to wait before being transmitted to the users, which increases the end-to-end delay, see Fig. \ref{model}. The maximum transmit power of the BS is $P_{\text{max}}$, while the maximum transmit power of each user in the uplink is $P_{k,\text{max}}$. 

In order to facilitate the presentation, in the following, we use superscript
$j \in \{u,d\}$ to denote uplink $u$ and downlink $d$.
\begin{remark}
	We note that the time and power consumed for channel estimation and resource allocation are constant and do not affect the proposed resource allocation algorithm. For simplicity of illustration, they are neglected in this paper. Furthermore, perfect channel state information (CSI) is assumed to be available at the BS for resource allocation design to obtain a performance upper bound for OFDMA-URLLC MEC systems.	\vspace{-0.45cm}
\end{remark}
\begin{figure}[t]
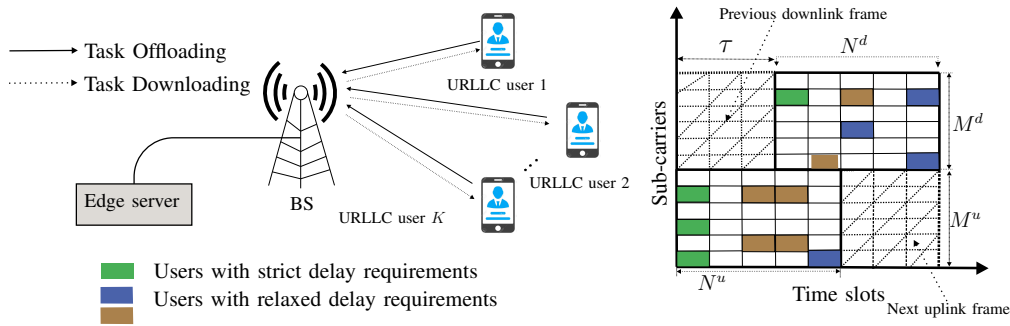

	\centering
	\scalebox{0.34}{
		\pstool{system}{
			\psfrag{e}[c][c][2]{Task Offloading}
			\psfrag{f}[c][c][2]{Sub-carriers}
			\psfrag{g}[c][c][2]{Task Downloading}
			\psfrag{a}[c][c][1.5]{URLLC user 1}	
			\psfrag{o}[c][c][2.5]{$\tau$}
			\psfrag{b}[c][c][1.5]{URLLC user 2}		
			\psfrag{c}[c][c][1.5]{\;\;URLLC user $K$}
			\psfrag{n}[c][c][1.8]{BS}
			\psfrag{s}[c][c][1.8]{Edge server}
			\psfrag{r}[c][c][2]{$\;\;$Users with relaxed delay requirements}
			\psfrag{h}[c][c][2]{Users with strict delay requirements}
			\psfrag{t}[c][c][2]{Time slots}
		    \psfrag{tu}[c][c][2]{$N^{u}$}
		    \psfrag{td}[c][c][2]{$N^{d}$}
		    \psfrag{wd}[c][c][2]{$M^{d}$}
	        \psfrag{wu}[c][c][2]{$M^{u}$}
	        \psfrag{nd}[c][c][1.5]{Previous downlink frame}
	        \psfrag{nu}[c][c][1.5]{Next uplink frame}
	}}
	\caption{ Multi-user MEC system comprising a single BS with an edge server and $K$ URLLC users.}
	\label{model}
	\vspace{-0.75cm}
\end{figure}
\subsection{Uplink and Downlink Channel Models}
In the following, we introduce the uplink and downlink channel models for the considered OFDMA-URLLC MEC system. We assume that the channel gains of all sub-carriers are constant for all users during uplink and downlink transmission. 
In the uplink, the signal received at the BS from user $k$ on sub-carrier $m^{u}$ in time slot $n^{u}$ is given as follows:	\vspace{-0.5cm}
\begin{equation}\label{su2a}\hspace{-0.65cm}
y^{u}_{k}[m^{u},n^{u}]=h_{k}^{u}[m^{u}]x^{u}_{k}[m^{u},n^{u}]+z^{u}_{BS}[m^{u},n^{u}],
\end{equation}
where $x^{u}_{k}[m^{u},n^{u}]$ denotes the symbol transmitted by user $k$ on sub-carrier $m^{u}$ in time slot $n^{u}$ to the BS. Moreover, $z^{u}_{BS}[m^{u},n^{u}]\sim \mathcal{CN}(0,\sigma^{2})$ denotes the noise on sub-carrier $m^{u}$ in time slot $n^{u}$ at the BS\footnote{Without loss of generality, we assume that the noise processes at all receivers have identical variances.}, and $h^{u}_{k}[m^{u}]$ represents the complex channel coefficient between user $k$ and the BS on sub-carrier $m^{u}$. For future reference, we define the signal-to-noise ratio (SNR) of user $k$'s signal at the input of the BS's receiver on sub-carrier $m^{u}$ in time slot $n^{u}$ as follows:	\vspace{-0.5cm}
\begin{equation}
\gamma^{u}_{k}[m^{u},n^{u}]=g^{u}_{k}[m^{u}]p^{u}_{k}[m^{u},n^{u}],
\end{equation}
where $p^{u}_{k}[m^{u},n^{u}]=\mathcal{E}\{|x^{u}_{k}[m^{u},n^{u}]|^{2}\}$ is the uplink transmit power of user $k$ on sub-carrier $m^{u}$ in time slot $n^{u}$, and $g^{u}_{k}[m^{u}]=\frac{|h^{u}_{k}[m^{u}]|^{2}}{\sigma^{2}}$. A similar channel model is assumed for downlink transmission and the corresponding SNR at user $k$ on sub-carrier $m^{d}$ in time slot $n^{d}$ is denoted by $\gamma^{d}_{k}[m^{d},n^{d}]$. 
\subsection{Achievable Rate for FBT}
Shannon's capacity theorem, on which most conventional resource allocation designs are based, applies to the asymptotic case where the packet length approaches infinity and the decoding error probability goes to zero \cite{shannon}. Thus, it cannot be used for resource allocation design for URLLC systems, as URLLC systems have to employ short packets to achieve low latency, which makes decoding errors unavoidable. For the performance evaluation of FBT, the so-called normal approximation for short packet transmission was developed in \cite{thesis}. For parallel complex additive white Gaussian noise (AWGN) channels, the maximum number of bits $\Psi$ conveyed in a packet comprising $L_{p}$ symbols can be approximated as follows\cite[Eq. (4.277)]{thesis},\cite[Fig. 1]{Erseghe1}:\vspace{-0.5cm}
\begin{align}\label{normalapproximation}
\Psi=\sum_{l=1}^{L_{p}}\log_{2}(1+\gamma[l])-aQ^{-1}(\epsilon)\sqrt{\sum_{l=1}^{L_{p}}{\nu}[l]},
\end{align}
where $\epsilon$ is the decoding packet error probability, and $Q^{-1}(\cdot)$ is the inverse of the Gaussian Q-function with $
Q(x)=\frac{1}{\sqrt{2\pi}}\int_{x}^{\infty}\text{exp}{\left(-\frac{t^{2}}{2}\right)}\text{d}t$.
$\nu[l]=(1-{(1+\gamma[l])^{-2}})$
and $\gamma[l]$ are the channel dispersion \cite{thesis} and the SNR of the $l$-th symbol, respectively, and  $a=\log_{2}(\text{e})$. \color{black}

In this paper, we base the joint uplink-downlink resource allocation algorithm design for OFDMA-URLLC MEC systems on (\ref{normalapproximation}). By allocating several resource elements from the available resources to a given user, the number of offloaded and downloaded bits of the user can be adjusted.
\section{Problem Formulation}
In this section, we explain the offloading and downloading process and introduce the QoS requirements of the OFDMA-URLLC MEC users. Moreover, we formulate the proposed resource allocation algorithm design as an optimization problem.	\vspace{-0.25cm}
\subsection{Computing Modes}
In this section, we explain the different computing modes of the users. First, we explain the local computing at the users. Then, we explain the steps required for offloading to the edge server.
 
\textit{1) Local Computing Mode:} According to \cite[Eq. (1)]{CPUscheduling,Energyfre}, the power consumption of the central processing unit (CPU) comprises the dynamic power, short circuit power, and leakage power where the
dynamic power is much larger than the other two. As a result, similar to \cite{Energyfre}, we only consider the dynamic power for local execution. According to \cite{CPUscheduling,Energyfre,miettinen2010energy}, the total energy required for computing a task of length $B_{k}$ bits at user $k$ is given by:\vspace{-0.5cm}
\begin{IEEEeqnarray}{lll}\label{entask}
	E_{k}=\kappa c_{k} B_{k} f_{k}^{2},
\end{IEEEeqnarray} 	
where $f_{k}$ denotes the CPU frequency of the $k$-th user, $\kappa$ is the effective switched capacitance which depends on the chip architecture and is assumed to be identical for all users, $c_{k}$ is the number of cycles required for processing of one bit which depends on the type of  application and the CPU architecture \cite{miettinen2010energy}. A user can reduce its total energy consumption by reducing the  CPU frequency. However, the task computing latency also depends on the frequency and is given as follows:\vspace{-0.5cm}
\begin{IEEEeqnarray}{lll}\label{latency}
	t_{k}=\frac{c_{k}B_{k}}{f_{k}}.
\end{IEEEeqnarray} 
Combining (\ref{entask}) and (\ref{latency}), the local power consumption at user $k$ is given as follows:\vspace{-0.5cm}
\begin{IEEEeqnarray}{lll}\label{localpower}
	P^{l}_{k}=\kappa f_{k}^{3}.
\end{IEEEeqnarray}
A local user can adjust its CPU frequency to minimize its local power consumption subject to a required task computing latency. Alternatively, considering the limited capability of its CPU, a user may prefer to offload its task to the edge server instead. This process is explained in the following. 

\textit{2) Offloading and Downloading:} The edge computing process is performed as follows. First, the user offloads its data to the edge server in the uplink. Subsequently, the edge server processes this data and sends the results back in the downlink transmission to the user. Thus, uplink and downlink transmission should satisfy the following constraints: \vspace{-0.25cm}
\begin{align}&\label{offu1}\hspace{-0.25cm}
\mathrm{C1}:\Psi^{u}_{k}(\mathbf{s}_{k}^{u},\mathbf{p}_{k}^{u}) \geq (1-\alpha_{k}) B_{k},\forall k,\
\mathrm{C2}:\Psi^{d}_{k}(\mathbf{s}_{k}^{d},\mathbf{p}_{k}^{d})\geq (1-\alpha_{k})\Gamma_{k}B_{k},\forall k,
\end{align} 
where\vspace{-0.75cm}
\begin{equation}\label{offu2}
\Psi^{j}_{k}(\mathbf{s}_{k}^{j},\mathbf{p}_{k}^{j})	=F^{j}_{k}(\mathbf{s}_{k}^{j},\mathbf{p}_{k}^{j})-V^{j}_{k}(\mathbf{s}_{k}^{j},\mathbf{p}_{k}^{j}), \forall j,
\end{equation}
and\vspace{-0.75cm}
\begin{align}\label{b2}\hspace{-0.5cm}
F^{j}_{k}(\mathbf{s}_{k}^{j},\mathbf{p}_{k}^{j})=\sum_{m^{j}=1}^{M^{j}}\sum_{n^{j}=1}^{N^{j }}s^{j}_{k}[m^{j},n^{j}]\log_{2}(1+\gamma^{j}_{k}[m^{j},n^{j}]), \forall j,
\end{align} \vspace{-1cm}
\begin{align}\label{b3}\hspace{-0.4cm}
V^{j}_{k}(\mathbf{s}^{j}_{k},\mathbf{p}_{k}^{j})=aQ^{-1}(\epsilon^{j}_{k})\sqrt{{\sum_{m^{j}=1}^{M^{j}}\sum_{n^{j}=1}^{N^{j}}s^{j}_{k}[m^{j},n^{j}]\nu^{j}_{k}[m^{j},n^{j}]}}, \forall j.
\end{align} 
Here, $s^{j}_{k}[m^{j},n^{j}]=\{0,1\}, \forall m^{j},n^{j}, k, \forall j,$ are the sub-carrier assignment indicators. If sub-carrier $m^{j}$ is assigned
to user $k$ in time slot $n^{j}$, we have $s^{j}_{k}[m^{j},n^{j}]=1$, otherwise $s^{j}_{k}[m^{j},n^{j}]=0$. Furthermore, we assume that each sub-carrier is allocated to at most one user to avoid multiple access interference.  $\mathbf{s}_{k}^{j}$ and $\mathbf{p}_{k}^{j}$ are the collections of optimization variables $s^{j}_{k}[m^{j},n^{j}], \forall m^{j},n^{j}$, and $p^{j}_{k}[m^{j},n^{j}], \forall m^{j},n^{j}, \forall j$, respectively, and
$
\nu^{j}_{k}[m^{j},n^{j}]=(1-(1+\gamma^{j}_{k}[m^{j},n^{j}])^{-2})$. Constraints $\mathrm{C1}$ and $\mathrm{C2}$ guarantee the transmission of $(1-\alpha_{k})B_{k}$ bits in the uplink and $\Gamma_{k}(1-\alpha_{k})B_{k}$ bits in the downlink for user $k$, respectively, where parameter $\Gamma_{k},\forall k,$ specifies the ratio of the size of the computing result and the size of the offloaded task. The value of $\Gamma_{k}$ depends on the application type, e.g., $\Gamma_{k}> 1$ for augmented reality applications\cite{jointsubchannel}. Moreover, $\alpha_{k}=\{0,1\}$  is the binary mode selection variable, where $\alpha_{k}=1$ for local computing and $\alpha_{k}=0$ for edge computing offloading. 
\subsection{Causality and Delay}
In the following, we explain the causality and delay constraints in the considered OFDMA-URLLC MEC system.

\textit{1) Causality:} Downlink transmission cannot start for a given user before all data of this user has been received at the BS via the uplink. Furthermore, according to Fig.~\ref{model}, uplink and downlink transmission overlap in time slot $n^{u}=\tau+o$ or equivalently $n_{d}=o, \forall o=\{1,\dots,\bar{O}\}$. For the downlink, we need to ensure that for each user $k$, if overlapping time slot $n^{d}=\tau+ o$ is allocated to the uplink, no overlapping time slot with $n^{d}\leq o$ is allocated to the downlink. Exploiting the binary nature of variables $s^u_k[m^u,n^u]$ and $s^d_k[m^d,n^d]$, this condition can be imposed by the following set of linear inequality constraints:	\vspace{-0.25cm} 
\begin{align}&\label{Co}\hspace{-.5cm}
\mathrm{C3}:s_{k}^{u}[m^{u},\tau+o]+s^{d}_{k}[m^{d},o]\leq 1 , \forall k, \forall m^{u}, \forall m^{d}, \forall o=\{1,\dots,\bar{O}\}. \vspace{-0.25cm}
\end{align}	
 As can be seen from (\ref{Co}), if user $k$ uses sub-carrier $m^{u}$ in time slot $n^{u}=\tau +o$, then the downlink resources at and before time slot $n^{d}=o$ will be forced to be zero, i.e., no data is sent to user $k$.

\textit{2) Delay:} The delay of a computing task is limited by requiring the downlink transmission to be finished before $D_{k}-\tau$ time slots as follows\footnote{In this paper, we neglect the computing time and power consumption at the edge server, and we only focus on uplink and downlink transmission. This model is valid when the edge server has sufficient processing and computing resources to carry out the small tasks of the URLLC users with negligible delay.}:\vspace{-0.25cm}
\begin{align}\hspace{-3cm}
\mathrm{C4}:s^{d}_{k}[m^{d},n^{d}] = 0, \forall n^{d} \geq D_{k}-\tau.\vspace{-0.25cm}
\end{align} 
The total latency of a computing task is determined by $D_{k}$ and $\tau$. Note that the values of $D_{k}$ and $\tau$ are assumed to be known for resource allocation.
\subsection{Total System Power Consumption}
The total system power consumption includes the power consumption of the users and the BS. The power consumption of user $k$ is given as follows\cite{powermodel,Zhoumec,user_centric}:\vspace{-0.25cm}
\begin{align}&\label{pk}
\overline{P}_{k}=\kappa f_{k}^{3}+\delta_{k} \sum_{m^{u}=1}^{M^{u}}\sum_{n^{u}=1}^{N^{u}} s^{u}_{k}[m^{u},n^{u}]p^{u}_{k}[m^{u},n^{u}]+(1-\alpha_{k})P_{\text{cir}},
\end{align} 
where the first term in (\ref{pk}) accounts for the local computation power consumption in case of local computing, the second term accounts for the power consumed for offloading transmission, and the third term accounts for the constant circuit power consumption during offloading. To model the  inefficiency of the power amplifiers of the users,
we introduce the multiplicative constant, $\delta_{k} \geq 1$, for the power radiated by the transmitter in (\ref{pk}) which takes into account the joint effect of the drain efficiency and backoff of the power amplifier \cite{poweramplifier}. Note that, as can be seen from $\mathrm{C1}$ and $\mathrm{C2}$, when $\alpha_{k}=1$, the required offloaded and downloaded data is zero, and hence, in this case, since we minimize the total power consumption, the power allocated for uplink transmission, $p^{u}_{k}[m^{u},n^{u}]$, will be zero $\forall m^{u}, \forall n^{u}$. On the other hand, for offloading, i.e., $\alpha_{k}=0$, the optimization problem formulated in the next subsection will ensure that the power consumption for local computing will be zero. Hence, there is no need to explicitly multiply the first and second term in (\ref{pk}) by $\alpha_{k}$ and $(1-\alpha_{k})$ to ensure that the terms are zero for offloading and local computing, respectively. Furthermore, due the significant computational resources of the BS, we neglect the corresponding computation power consumption. Moreover, since in practice the BS does not only serve the MEC users considered for resource allocation but also non-MEC users, the BS circuit power consumption is also not considered for optimization. Thus, the relevant weighted system power consumption is modelled as follows:
\begin{align}&\label{totalpower}
\Phi=\sum_{k=1}^{K}w_{k}\overline{P}_{k}+\delta_{\textrm{BS}}\sum_{k=1}^{K}\sum_{m^{d}=1}^{M^{d}}\sum_{n^{d}=1}^{N^{d}}s^{d}_{k}[m^{d},n^{d}]p^{d}_{k}[m^{d},n^{d}],
\end{align} 
where the second term in (\ref{totalpower}) represents the power consumption of the BS for downlink transmission and $\delta_{\textrm{BS}}\geq 1$ accounts for the inefficiency of the BS power amplifier. Moreover, $w_{k}\geq 1, \forall k,$ are weights that allow the prioritization of the users' power consumption compared to the BS's power consumption.     
\subsection{Optimization Problem Formulation}
In the following, we formulate the resource allocation problem with the goal to minimize the total weighted network power consumption, while satisfying the latency requirements of the users' computing tasks. In particular, we optimize the uplink and downlink transmit powers, the uplink and downlink sub-carrier assignment, the CPU frequency of the local CPUs, and the mode selection of each user. To this end, the optimization problem is formulated as follows: \vspace{-0.5cm}
\begin{align}\label{Op1}
&\hspace{-0.2cm}\underset {\boldsymbol{f}, \mathbf{s}^{u}, \mathbf{p}^{u}, \mathbf{s}^{d}, \mathbf{p}^{d}, \boldsymbol{\alpha}}{\text{minimize}}\Phi\\&\text{s.t.}\; \mathrm{C1-C4}, \;\mathrm{C5}:\sum_{k=1}^{K}s^{u}_{k}[m^{u},n^{u}] \leq 1, \forall m^{u},n^{u},\;\;
\mathrm{C6}:s^{u}_{k}[m^{u},n^{u}] \in \{0,1\}, \forall k,m^{u},n^{u},\nonumber\\&  \quad\;\;
\mathrm{C7}:\sum_{k=1}^{K}s^{d}_{k}[m^{d},n^{d}] \leq 1, \forall m^{d},n^{d}, \;\; \mathrm{C8}: s^{d}_{k}[m^{d},n^{d}] \in \{0,1\}, \forall k,m^{d},n^{d}, \nonumber\\&  \nonumber\quad\;\;
\mathrm{C9}:\sum_{m^{u}=1}^{M^{u}}\sum_{n^{u}=1}^{N^{u}}s^{u}_{k}[m^{u},n^{u}]p^{u}_{k}[m^{u},n^{u}]\leq P_{k,\text{max}}, \forall k, \;  \mathrm{C10}: p^{u}_{k}[m^{u},n^{u}] \geq 0, \forall k, m^{u}, n^{u},\; \\&  \quad\;\;
\mathrm{C11}:\sum_{k=1}^{K}\sum_{m^{d}=1}^{M^{d}}\sum_{n^{d}=1}^{N^{d}}s^{d}_{k}[m^{d},n^{d}]p^{d}_{k}[m^{d},n^{d}] \leq P_{\text{max}},\;\; \mathrm{C12}: p^{d}_{k}[m^{d},n^{d}] \geq 0, \forall k, m^{d}, n^{d},\nonumber
\\&  \quad\;\; \mathrm{C13}: c_{k}\alpha_{k}B_{k} \leq T_{s} f_{k} D_{k}, \forall k,  \; \mathrm{C14}: \alpha_{k} \in \{0,1\}, \forall k,  \;\;
\mathrm{C15}:0 \leq f_{k} \leq f_{\text{max}}, \forall k. \nonumber
\end{align}
Here, $\boldsymbol{f}$, $\mathbf{s}^{u}$, $\mathbf{p}^{u}$, $\mathbf{s}^{d}$, $\mathbf{p}^{d}$, and $\boldsymbol{\alpha}$ are the collections of optimization variables $f_{k}, \forall k$, $\mathbf{s}_{k}^{u}, \forall k$, $\mathbf{p}_{k}^{u}, \forall k$, $\mathbf{s}_{k}^{d}, \forall k$, $\mathbf{p}_{k}^{d}, \forall k$, and $\alpha_{k}, \forall k$, respectively. 

In (\ref{Op1}), constraints $\mathrm{C1}$ and $\mathrm{C2}$ guarantee the transmission of the required number of bits from user $k$ to the BS in the uplink and from the BS to user $k$ in the downlink, respectively, if the user offloads the task, i.e., $\alpha_{k}=0$. Constraint $\mathrm{C3}$ is the uplink-downlink causality constraint and constraint $\mathrm{C4}$ ensures that user $k$ is served such that its task meets the associated delay requirements. Constraints $\mathrm{C5}$ and $\mathrm{C6}$ for the uplink and constraints $\mathrm{C7}$ and $\mathrm{C8}$ for the downlink are imposed to ensure that each sub-carrier in a given time slot is allocated to at most one user. Constraints $\mathrm{C9}$ and $\mathrm{C11}$ are the total transmit power constraints of user $k$ and the BS, respectively. Constraints $\mathrm{C10}$ and $\mathrm{C12}$ are the non-negative transmit power constraints. Constraint $\mathrm{C13}$ ensures that the maximum allowed delay for local computing is not exceed when $\alpha_{k}=1$. 
 Constraint $\mathrm{C14}$ is the mode selection constraint. Finally, constraint $\mathrm{C15}$ limits the CPU frequency of the local CPUs to $f_{\text{max}}$.
% Moreover, as can be seen in $\mathrm{C1}$ and $\mathrm{C2}$, when $\alpha_{k}=1$, the required offloaded and downloaded data is zero, and in this case, since we minimize the total power consumption, the power allocated for transmission and reception, and the circuit power consumption at the users and the power consumption of the BS will be zero. 
\begin{remark}
	Resource allocation algorithm design for conventional MEC
	systems is typically based on Shannon's capacity formula, i.e., $V^{u}_{k}(\mathbf{s}_{k}^{u},\mathbf{p}_{k}^{u})$ and $V^{d}_{k}(\mathbf{s}_{k}^{d},\mathbf{p}_{k}^{d})$ in $\mathrm{C1}$ and $\mathrm{C2}$ are absent. The presence of $V^{u}_{k}(\mathbf{s}_{k}^{u},\mathbf{p}_{k}^{u})$ and $V^{d}_{k}(\mathbf{s}_{k}^{d},\mathbf{p}_{k}^{d})$ makes optimization problem (\ref{Op1}) significantly more difficult to solve but is essential for capturing the characteristics of OFDMA-URLLC MEC systems. 
\end{remark}
Problem (\ref{Op1}) is a mixed integer non-convex optimization problem. Such problems are in general NP hard and are known to be difficult to solve. However, in the next section, we propose an optimal scheme based on a branch-and-bound approach using monotonic optimization which finds the optimal solution of the considered problem. Moreover, in Section V, we propose two efficient suboptimal schemes that find close-to-optimal solutions
 and entail low computational complexity. 
\section{Proposed Global Optimal Solution}
In this section, we propose a branch-and-bound algorithm to solve problem (\ref{Op1}) optimally. Different from the general branch-and-bound algorithms proposed for non-convex problems, e.g., \cite{branch}, the proposed branch-and-bound algorithm exploits the monotonicity of the problem to reduce the search space for faster convergence\cite{biofast}. The purpose of finding a global optimal solution to (\ref{Op1}) is twofold: (1) determining a performance upper bound for OFDMA-URLLC MEC systems, and (2) having a benchmark for the efficient suboptimal solutions presented in Section V. We first introduce some mathematical background on monotonic optimization theory. Then, we transform optimization problem (\ref{Op1}) into the canonical form of discrete monotonic optimization. Finally, we present the optimal algorithm based on a new branch-and-bound algorithm which aims to minimize an upper bound on the objective function of (\ref{Op1}) until convergence to the optimal solution.  
\subsection{Mathematical Preliminaries for Monotonic Optimization}
In this subsection, we introduce some mathematical preliminaries for monotonic optimization\cite{yan,Tuy,Cheonmono,optimatravo}. 
\begin{defn}[{Increasing function}]
	A function $\psi :\mathbb{R}_{+}^{N\times 1} \rightarrow \mathbb{R} $ is increasing if $\psi(\mathbf{x}) \leq \psi(\mathbf{y})$ when $ 0 \leq \mathbf{x} \leq \mathbf{y}.$
\end{defn}
\begin{defn}[{Box}]
	Given any vector $\mathbf{x} \in \mathbb{R}_{+}^{N\times 1},$ the hyperrectangle $[\mathbf{\underline{x}}, {\mathbf{\overline{x}}}]=\{\mathbf{x}| \mathbf{\underline{x}} \leq \mathbf{x} \leq \mathbf{\overline{x}}\}$ is referred to as a box with lower and upper corners  $\mathbf{\underline{x}}$ and $\mathbf{\overline{x}}$, respectively.
\end{defn}
\begin{defn}[{Normal set}]
	A set  $\mathcal{G}\subset \mathbb{R}_{+}^{N\times 1}$ is normal if given any element $\mathbf{x} \in \mathcal{G}$, the box $[\mathbf{0},\mathbf{x}] \subset \mathcal{G}$.
\end{defn}
\begin{defn}[{Co-normal set}]
	A set  $\mathcal{H}$ is co-normal if $\mathbf{x}\in \mathcal{H}$ and $\mathbf{x}^{'}>\mathbf{x}$ imply $\mathbf{x}^{'} \in \mathcal{H}$. 
\end{defn}
\begin{defn}
	An optimization problem belongs to the class of discrete monotonic optimization problems if it can be represented in the following form\cite{yan,Tuy}:
	\begin{align}&
	\text{P1}: \underset {\mathbf {x}}{ \mathop {\mathrm {minimize}}\nolimits } ~\Lambda (\mathbf {x}) \qquad \mbox {s.t.}~\mathbf {x}\in \mathcal {V}=\mathcal {G} \cap \mathcal {H},
	\end{align}
	where $\Lambda(\mathbf {x})$ is an increasing function on $\mathbb{R}_{+}^{N\times 1}$ in $\mathbf {x}$ and $\mathcal{V}$ is a normal non-empty closed set, which is the intersection of normal set $\mathcal {G}$ and co-normal set $\mathcal {H}$.
\end{defn}
The solution of monotonic optimization problem $\text{P1}$ lies on the boundary of the feasible set\cite{Tuy}. As shown in \cite{Tuymonotonic,Tuy,Zhangmonotonic,emiloptimal,yan,yanmisoc,optimatravo}, the branch-and-bound algorithm can be used to iteratively approximate the boundary of the feasible set of $\text{P1}$ to find the global optimum solution in a finite number of iterations. In the following, we transform optimization problem (\ref{Op1}) into a monotonic optimization problem. Then, we propose an optimal algorithm based on the branch-and-bound technique.
\subsection{Problem Transformation}
In this subsection, we transform problem (\ref{Op1}) into the canonical form of a monotonic optimization problem. First, we introduce the following constraints in optimization problem (\ref{Op1}):    \vspace{-0.5cm}
\begin{align}&\label{prod}
\mathrm{C16}:{p}_{k}^{u}[m^{u},n^{u}]=s_{k}^{u}[m^{u},n^{u}] {p}_{k}^{u}[m^{u},n^{u}], \forall k,m^{u},n^{u},\\&\label{prod2a}
\mathrm{C17}:{p}_{k}^{d}[m^{d},n^{d}]=s_{k}^{d}[m^{d},n^{d}] {p}_{k}^{d}[m^{d},n^{d}], \forall k,m^{d},n^{d}.
\end{align}
Based on (\ref{prod}) and (\ref{prod2a}) optimization problem (\ref{Op1}) is transformed into the following equivalent form: 
\begin{align}\label{Op2}
&\underset {\boldsymbol{f}, \mathbf{s}^{u}, \mathbf{p}^{u}, \mathbf{s}^{d}, \mathbf{p}^{d}, \boldsymbol{\alpha}}{\text{minimize}}\sum_{k=1}^{K}w_{k}\bigg(\kappa f_{k}^{3}+\delta_{k}\sum_{m^{u}=1}^{M^{u}}\sum_{n^{u}=1}^{N^{u}}{p}^{u}_{k}[m^{u},n^{u}]+(1-\alpha_{k})P_{\text{cir}}\bigg)+\delta_{\textrm{BS}}\sum_{k=1}^{K}\sum_{m^{d}=1}^{M^{d}}\sum_{n^{d}=1}^{N^{d}}{p}^{d}_{k}[m^{d},n^{d}]\\& \nonumber\text{s.t.} \; \mathrm{C1}:	F^{u}_{k}( {\mathbf{p}}_{k}^{u})-V^{u}_{k}( {\mathbf{p}}_{k}^{u})\geq (1-\alpha_{k})B_{k},\forall k,\nonumber\;  \;\mathrm{C2}: F^{d}_{k}( {\mathbf{p}}_{k}^{d})-V^{d}_{k}( {\mathbf{p}}_{k}^{d})\geq (1-\alpha_{k})\Gamma_{k}B_{k} ,\forall k, \;\; \mathrm{C3-C8,}\nonumber\\ &\quad\;\; \;
\mathrm{C9}:\sum_{m^{u}=1}^{M^{u}}\sum_{n^{u}=1}^{N^{u}}{p}^{u}_{k}[m^{u},n^{u}] \leq P_{k,\text{max}}, \forall k, \;\; \mathrm{C10},\nonumber  \mathrm{C11}: \sum_{k=1}^{K}\sum_{m^{d}=1}^{M^{d}}\sum_{n^{d}=1}^{N^{d}}{p}^{d}_{k}[m^{d},n^{d}] \leq P_{\text{max}},\;\;\mathrm{C12-C17}.\nonumber
\end{align}
where\vspace{-0.75cm}
\begin{align}\label{b23}\hspace{-1.75cm}
F^{j}_{k}(\mathbf{p}_{k}^{j})=\sum_{m^{j}=1}^{M^{j}}\sum_{n^{j}=1}^{N^{j }}\log_{2}(1+\gamma^{j}_{k}[m^{j},n^{j}]), \quad 
V^{j}_{k}(\mathbf{p}_{k}^{j})=aQ^{-1}(\epsilon^{j}_{k})\sqrt{{\sum_{m^{j}=1}^{M^{j}}\sum_{n^{j}=1}^{N^{j}}\nu^{j}_{k}[m^{j},n^{j}]}}.
\end{align} 

Although optimization problem (\ref{Op2}) is still non-convex, it is more tractable compared to equivalent problem (\ref{Op1}), and as is shown in the following, it can be transformed into a monotonic optimization problem. To this end, we first study the monotonicity of problem (\ref{Op2}) in the following two lemmas. 
\begin{lem}
	Constraints $\mathrm{C1}$ and $\mathrm{C2}$ are differences of two monotonic and concave functions.
\end{lem}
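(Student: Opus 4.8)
The plan is to prove the two asserted structural properties—monotonicity (understood as increasing in the componentwise order, as defined above) and concavity—separately for the functions $F^{j}_{k}$ and $V^{j}_{k}$ defined in (\ref{b23}); the difference representation of $\mathrm{C1}$ and $\mathrm{C2}$ is then immediate from their explicit form $F^{j}_{k}(\mathbf{p}_{k}^{j})-V^{j}_{k}(\mathbf{p}_{k}^{j})$ in (\ref{Op2}). Because the uplink ($j=u$) and downlink ($j=d$) cases are structurally identical, I would argue for a generic $j$. The key preliminary observation is that the SNR $\gamma^{j}_{k}[m^{j},n^{j}]=g^{j}_{k}[m^{j}]\,p^{j}_{k}[m^{j},n^{j}]$ is a \emph{linear} function of its power variable with nonnegative slope $g^{j}_{k}[m^{j}]\geq 0$, so all nonlinearity enters only through the outer scalar maps, and the whole argument reduces to composition rules from convex analysis.

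For $F^{j}_{k}$ I would note that each summand $\log_{2}(1+\gamma^{j}_{k}[m^{j},n^{j}])$ is the composition of the scalar map $t\mapsto\log_{2}(1+t)$, which is increasing and concave on $\R_{+}$, with the linear map $p\mapsto g^{j}_{k}[m^{j}]\,p$. Composition with an affine map preserves concavity, so each summand is concave; monotonicity follows because $\log_{2}(1+\cdot)$ is increasing and the slope $g^{j}_{k}[m^{j}]$ is nonnegative. As the sum is separable across resource elements, $F^{j}_{k}(\mathbf{p}_{k}^{j})$ is increasing and concave on the joint power domain.

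The crux of the proof is $V^{j}_{k}$, which I would treat in two stages. First, for each resource element the dispersion term $\nu^{j}_{k}[m^{j},n^{j}]=1-(1+g^{j}_{k}[m^{j}]\,p)^{-2}$ is shown to be increasing and concave in $p$ by computing its first derivative $2g^{j}_{k}[m^{j}](1+g^{j}_{k}[m^{j}]\,p)^{-3}\geq 0$ and its second derivative $-6\,(g^{j}_{k}[m^{j}])^{2}(1+g^{j}_{k}[m^{j}]\,p)^{-4}\leq 0$. By separability the inner sum $\sum_{m^{j},n^{j}}\nu^{j}_{k}[m^{j},n^{j}]$ is therefore increasing and concave. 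Second, I would compose this with the outer square root: since $t\mapsto\sqrt{t}$ is increasing and concave on $\R_{+}$, and an increasing concave function composed with a concave function is again concave, the term $\sqrt{\sum_{m^{j},n^{j}}\nu^{j}_{k}[m^{j},n^{j}]}$ is concave, and it is increasing because both the root and the inner sum are increasing. Multiplying by the nonnegative constant $aQ^{-1}(\epsilon^{j}_{k})\geq 0$ (which holds in the relevant reliability regime $\epsilon^{j}_{k}\leq 1/2$) preserves both properties, so $V^{j}_{k}(\mathbf{p}_{k}^{j})$ is increasing and concave. Hence $\mathrm{C1}$ and $\mathrm{C2}$ are, for both $j=u$ and $j=d$, differences of two increasing concave functions, as claimed.

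I expect the main obstacle to be the concavity of $V^{j}_{k}$, and specifically the correct use of the composition rule for the square root: concavity of the outer map alone does not suffice—one must additionally invoke that $\sqrt{\cdot}$ is nondecreasing in order to conclude that $\sqrt{\text{concave}}$ is concave. Establishing concavity of each $\nu^{j}_{k}[m^{j},n^{j}]$ through the sign of its second derivative is the routine but essential ingredient that makes this composition step valid.
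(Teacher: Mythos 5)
Your proof is correct and complete. Note that the paper itself does not actually supply a proof of this lemma---it states only that the argument ``closely follows a similar proof in \cite{ghanem1}'' and omits it for space---so your derivation fills in precisely the details the authors left out, and it does so by the standard route that such proofs take: affine composition and separability for $F^{j}_{k}$, sign checks on the first and second derivatives of the dispersion terms $\nu^{j}_{k}$, and the nondecreasing-concave composition rule to handle the outer square root in $V^{j}_{k}$. Your emphasis on the monotonicity requirement in that composition rule (concavity of $\sqrt{\cdot}$ alone would not suffice) and on the sign of $Q^{-1}(\epsilon^{j}_{k})$ for $\epsilon^{j}_{k}\leq 1/2$ are exactly the two points where a careless version of this argument would be incomplete; both are handled correctly here.
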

\begin{proof}
	The proof closely follows a similar proof in \cite{ghanem1}, and is omitted here due to space limitation.
\end{proof}
\begin{lem}[see \cite{Tuymonotonic}]\label{lem2}
	Assume we have the following inequality $g(x)-h(x)\leq 0$, where both $g(x)$ and $h(x)$ are increasing functions. Assuming $ 0 \leq x \leq b$, then, $g(x) \leq g(b)$. Thus, there exist positive $t$ such that $g(x)+t \leq g(b)$. Therefore, the inequality $g(x)-h(x) \leq 0$ can be split into two inequalities $g(x)+t \leq g(b)$, $h(x)+t \geq g(b)$, where $0 \leq t \leq g(b)$.	
\end{lem}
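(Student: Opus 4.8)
The plan is to prove the claimed equivalence directly, treating the single difference-of-increasing-functions inequality $g(x)-h(x)\le 0$ and the pair $g(x)+t\le g(b)$, $h(x)+t\ge g(b)$ as two descriptions of the same feasible set, and showing that each set of conditions implies the other. The auxiliary scalar $t$ is a slack that I would introduce precisely to decouple the increasing part $g$ (which will later generate the normal set $\mathcal G$) from the increasing part $h$ (which will generate the co-normal set $\mathcal H$); this splitting is exactly what is needed to recast $\mathrm{C1}$ and $\mathrm{C2}$ into the canonical monotonic form of $\text{P1}$.

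First I would dispatch the reverse direction, which is immediate. If some $t\ge 0$ satisfies both $g(x)+t\le g(b)$ and $h(x)+t\ge g(b)$, then chaining the two through the common value $g(b)$ gives $g(x)+t\le g(b)\le h(x)+t$, and cancelling the shared $t$ yields $g(x)\le h(x)$, i.e. $g(x)-h(x)\le 0$. No monotonicity is even needed here.

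For the forward direction I would assume $g(x)-h(x)\le 0$, i.e. $g(x)\le h(x)$, and exhibit an admissible $t$. Rewriting the two target inequalities as $t\le g(b)-g(x)$ and $t\ge g(b)-h(x)$, a feasible $t$ exists exactly when the interval $[\,g(b)-h(x),\,g(b)-g(x)\,]$ is non-empty, and this holds because $g(b)-h(x)\le g(b)-g(x)$ is equivalent to the hypothesis $g(x)\le h(x)$. At this point I would invoke the monotonicity of $g$ together with the box constraint $0\le x\le b$ to obtain $g(x)\le g(b)$, so the upper endpoint satisfies $g(b)-g(x)\ge 0$; hence any choice $t\in[\,\max\{0,\,g(b)-h(x)\},\,g(b)-g(x)\,]$ is non-negative and admissible. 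Since the increasing functions relevant here are non-negative on the positive orthant (as is the case for $F^{j}_{k}$ and $V^{j}_{k}$), we have $g(x)\ge 0$, whence $g(b)-g(x)\le g(b)$ and therefore $t\in[0,g(b)]$, as claimed.

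The only point requiring any care, and hence the main (though mild) obstacle, is verifying that the slack interval is simultaneously non-empty and contained in $[0,g(b)]$; this is precisely where the monotonicity of $g$ and the box bound $x\le b$ enter, since together they guarantee $0\le g(b)-g(x)\le g(b)$. Everything else is a direct manipulation of scalar inequalities, so no additional machinery is required.
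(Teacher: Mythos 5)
Your proof is correct. Note, however, that the paper itself offers no proof of this lemma at all: it simply cites the monotonic-optimization literature, and the only argument it gives is the one-line sketch embedded in the statement (monotonicity of $g$ plus $x\le b$ gives $g(x)\le g(b)$, hence a slack $t$ exists). What you have done is genuinely more than that: you prove the full two-way equivalence of the feasible sets, treating $t$ as a projection/slack variable exactly as it is later used (the variables $\zeta_k^u,\zeta_k^d$ in constraints $\mathrm{C1a}$--$\mathrm{C2b}$). Your write-up also makes explicit which hypotheses enter where, something neither the lemma statement nor the citation does: the reverse direction ($\exists\, t$ implies $g(x)\le h(x)$) needs no monotonicity whatsoever; monotonicity of $g$ is used only to ensure the upper endpoint $g(b)-g(x)$ of the slack interval is non-negative; monotonicity of $h$ is never used; and the confinement $t\le g(b)$ requires the additional (implicit) hypothesis $g(x)\ge 0$, which you correctly flag and which indeed holds for the functions $F^{j}_{k}$ and $V^{j}_{k}$ to which the lemma is applied. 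One cosmetic mismatch: the paper asserts existence of a \emph{positive} $t$, which fails in the boundary case $g(x)=g(b)$; your version with $t\ge 0$ is the correct statement and is what the subsequent constraints $0\le\zeta_k^j$ actually use.
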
 
Therefore, based on Lemma \ref{lem2}, by defining positive auxiliary optimization variables $0 \leq \zeta_{k}^{u} \leq V^{u}_{k}(P_{k,\text{max}}), \forall k$, and $0 \leq \zeta_{k}^{d} \leq V^{d}_{k}(P_{\text{max}}), \forall k,$ we transform  non-monotonic constraints $\mathrm{C1}$ and $\mathrm{C2}$ into the following equivalent monotonic constraints:\vspace{-0.25cm}
\begin{align}&\label{1a}
\mathrm{C1a}:F^{u}_{k}(\mathbf{{p}}_{k}^{u})+\zeta_{k}^{u} \geq V^{u}_{k}(P_{k,\text{max}})+ (1-\alpha_{k})B_{k},\forall k,\;\;
\mathrm{C1b}:V^{u}_{k}(\mathbf{{p}}_{k}^{u})+\zeta_{k}^{u} \leq V^{u}_{k}(P_{k,\text{max}}),\forall k,\\&\label{2a}
\mathrm{C2a}:F^{d}_{k}(\mathbf{{p}}_{k}^{d})+\zeta_{k}^{d} \geq V^{d}_{k}(P_{\text{max}})+ (1-\alpha_{k})\Gamma_{k}B_{k},\forall k,
\;\;\mathrm{C2b}:V^{d}_{k}(\mathbf{{p}}_{k}^{d})+\zeta_{k}^{d} \leq V^{d}_{k}(P_{\text{max}}),\forall k,
\end{align}
where $V^{u}_{k}(P_{k,\text{max}})$ is obtained by allocating all  power available in the uplink, i.e., $P_{k,\text{max}}$, to time slot $n^{j}$, sub-carrier $m^{j}$, and user $k$. $V^{d}_{k}(P_{\text{max}})$ is defined in a similar way. Now, optimization problem (\ref{Op2}) can be transformed into the following equivalent form:\vspace{-0.25cm}
\begin{align}\label{Op3b}
&\hspace{-0.35cm}\underset {\boldsymbol{f}, \mathbf{s}^{u}, \mathbf{p}^{u}, \mathbf{s}^{d}, \mathbf{p}^{d}, \boldsymbol{\alpha}, \boldsymbol{\zeta}}{\text{minimize}}\sum_{k=1}^{K}w_{k}\bigg(\kappa f_{k}^{3}+\delta_{k}\sum_{m^{u}=1}^{M^{u}}\sum_{n^{u}=1}^{N^{u}}{p}^{u}_{k}[m^{u},n^{u}]+(1-\alpha_{k})P_{\text{cir}}\bigg)+\delta_{\textrm{BS}}\sum_{k=1}^{K}\sum_{m^{d}=1}^{M^{d}}\sum_{n^{d}=1}^{N^{d}}{p}^{d}_{k}[m^{d},n^{d}]\\
&\text{s.t.} \; \mathrm{C1a, C1b, C2a, C2b, C3-C17},\nonumber
\end{align}
where $\boldsymbol{\zeta}$ is the collection of optimization variables $\zeta_{k}^{j}, \forall k,j$. In order to find an optimal solution for (\ref{Op3b}), we perform an exhaustive search over the binary variables in $\boldsymbol{\alpha}$. For a given $\alpha_{k}=\bar{{\alpha}}_{k}, \forall k,$ optimization problem (\ref{Op3b}) reduces to the following optimization problem:  \vspace{-0.25cm}
\begin{align}\label{Op3c}
&\underset {\boldsymbol{f}, \mathbf{s}^{u}, \mathbf{p}^{u}, \mathbf{s}^{d}, \mathbf{p}^{d}, \boldsymbol{\zeta}}{\text{minimize}}\sum_{k=1}^{K}w_{k}\bigg(\kappa f_{k}^{3}+\delta_{k}\sum_{m^{u}=1}^{M^{u}}\sum_{n^{u}=1}^{N^{u}}{p}^{u}_{k}[m^{u},n^{u}]+(1-\bar{\alpha}_{k})P_{\text{cir}}\bigg)+\delta_{\textrm{BS}}\sum_{k=1}^{K}\sum_{m^{d}=1}^{M^{d}}\sum_{n^{d}=1}^{N^{d}}{p}^{d}_{k}[m^{d},n^{d}]\\
&\text{s.t.} \mathrm{C1a}:F^{u}_{k}(\mathbf{{p}}_{k}^{u})+\zeta_{k}^{u} \geq V^{u}_{k}(P_{k,\text{max}})+ (1-\bar{\alpha}_{k})B_{k},\forall k, \;\nonumber
\mathrm{C1b},\\&\quad \; \mathrm{C2a}:F^{d}_{k}(\mathbf{{p}}_{k}^{d})+\zeta_{k}^{d} \geq V^{d}_{k}(P_{\text{max}})+ (1-\bar{\alpha}_{k})\Gamma_{k}B_{k},\forall k,
\mathrm{C2b}, \mathrm{C3-C13},\nonumber
\\&\quad \; \mathrm{C15}: c_{k}\bar{\alpha}_{k}B_{k} \leq T_{s} f_{k} D_{k}, \forall k, \mathrm{C16, C17}.\nonumber
\end{align}
The optimal solution of problem (\ref{Op3b}) can be obtained by solving problem (\ref{Op3c}) for all  $2^{K}$ possible values of ${\boldsymbol{\alpha}}$ . Then, we select that ${\boldsymbol{\alpha}}=\bar{\boldsymbol{\alpha}}$ which minimizes the objective function of (\ref{Op3c}). Problem (\ref{Op3c}) is in the canonical form of a discrete monotonic optimization problem. Moreover, to facilitate the design of an optimal algorithm for solving (\ref{Op3c}), we rewrite (\ref{Op3c}) in the following form:
\begin{IEEEeqnarray}{lll}\label{Op34}& \underset {\boldsymbol{f}, \mathbf{s}^{u}, \mathbf{p}^{u},\mathbf{s}^{d}, \mathbf{p}^{d}, \boldsymbol{\zeta}}{ \mathop {\mathrm {minimize}}\nolimits }~	\bar{\Phi}
	\\&\;\; \mathrm {s.t.}~ \mathcal{V} \in \mathcal{G} \cap\mathcal{H},\nonumber
\end{IEEEeqnarray} 
where $\bar{\Phi}$ is the objective function in (\ref{Op3c}). Set $\mathcal{G}$ is defined by constraints $\mbox{C1b}, \mbox{C2b}$, and $\mbox{C3-C17}$, and co-normal set $\mathcal{H}$ is defined by constraints $\mbox {C1a}$ and $\mbox {C2a}$. The main difficulty in solving problem (\ref{Op34}) are the reverse convex constraints $\mbox{C1b}$, $\mbox{C2b}$, and the non-convex binary constraints $\mbox{C6}$ and $\mbox{C8}$. Moreover, for given $(\boldsymbol{f}, \mathbf{p}^{u}, \mathbf{p}^{d}, \boldsymbol{\zeta})$, problem (\ref{Op34}) can be solved optimally in the remaining variables as we will explain in the following. Therefore, an efficient algorithm to find the optimal solution of (\ref{Op34}) can be constructed by dividing optimization variables $\boldsymbol{f}$, $\mathbf{s}^{u}, \mathbf{p}^{u}, \mathbf{s}^{d}, \mathbf{p}^{d}$, and $\boldsymbol{\zeta}$ into two sets. The first set contains the convex variables $\boldsymbol{f}$ and $\boldsymbol{\zeta}$ and the non-convex variables $\mathbf{{p}}^{u}$ and $\mathbf{{p}}^{d}$ as the so-called outer variables, while the second set contains the binary variables $\mathbf{s}^{u}$ and $ \mathbf{s}^{d}$ as the so-called inner variables. Furthermore, once $\mathbf{{p}}^{u}$ and $\mathbf{{p}}^{d}$ have been determined, according to (\ref{prod}), (\ref{prod2a}), we can obtain the values of $\mathbf{{s}}^{u}$ and $\mathbf{{s}}^{d}$ by comparing the values of the entries of $\mathbf{p}^{u}$ and $\mathbf{p}^{d}$  with zero. If the value of $p_{k}[m^{j},n^{j}]$ is greater than 0, this means that the corresponding $s_{k}[m^{j},n^{j}]=1$, otherwise $s_{k}[m^{j},n^{j}]=0$. Moreover, for given $\boldsymbol{f}$, $\mathbf{{p}}^{u}$, $\mathbf{{p}}^{d}$, and $\boldsymbol{\zeta}$, problem (\ref{Op34}) turns into the following feasibility check problem:\vspace{-0.75cm}
\begin{IEEEeqnarray}{lll}\label{Op34b}& \underset {\mathbf{s}^{u},\mathbf{s}^{d}}{ \mathop {\mathrm {minimize}}\nolimits }~	1
	\\&\;\; \mathrm {s.t.}~ \mathcal{V} \in \mathcal{G} \cap \mathcal{H}.\nonumber
\end{IEEEeqnarray}
Since the values of $\mathbf{{s}}^{u}$ and $\mathbf{{s}}^{d}$ are known, we can simply check the constraint in (\ref{Op34b}).
\subsection{Design of Optimal Algorithm}
Optimization problem (\ref{Op34}) is a discrete monotonic optimization problem which can be optimally solved via the branch-and-bound algorithm as explained in the following\cite{8008832,biofast}. To facilitate the presentation of the optimal solution, we collect optimization variables $(\boldsymbol{f}, \mathbf{p}^{u}, \mathbf{p}^{d}, \boldsymbol{\zeta})$ in vector $\mathbf{u} \in \mathbb{R}^{L}$, where $L=K+KM^{u}N^{u}+KM^{d}N^{d}+2K$. The solution of (\ref{Op34}) lies on the boundary of the feasible set, due to the monotonicity of the objective function and the constraints. However, the boundary of the feasible set is unknown. Thus, we approach the boundary by enclosing the feasible set $\mathcal{V}=\mathcal{G} \cap \mathcal{H}$ by an initial box $\mathcal{B}^{(0)}=[\mathbf{\underline{u}}^{(0)}\;\mathbf{\overline{u}}^{(0)}]$, where $\mathbf{\underline{u}}^{(0)}$ and $\mathbf{\overline{u}}^{(0)}$ are lower and upper bounds, respectively, for the collection of variables $\mathbf{u}$. We ensure $\mathbf{\underline{u}}^{(0)}$ and $\mathbf{\overline{u}}^{(0)}$ to be contained in $\mathcal{G} \setminus \mathcal{H}$ and $\mathcal{H}$, respectively. If this condition is not satisfied, either the problem is infeasible (when $\mathbf{\underline{u}}^{(0)}$ is not in set $\mathcal{G}$) or $\mathbf{\underline{u}}^{(0)}$ is an optimal solution of the problem (when $\mathbf{\underline{u}}^{(0)}$ is in $\mathcal{V}$). Iteratively, we split certain hyperrectangles, i.e., boxes, on the optimization variables $\mathbf{u}$ and try to improve a lower bound and an upper bound on the optimal value of the objective function. To aid this process, a local lower bound $L_{\mathcal{B}}$ is stored for each box $\mathcal{B} \in \mathcal{L},$ where $\mathcal{L}$ is the set of all available boxes. Moreover, the current best value of the objective function obtained so far is denoted by $C_{BV}$. An algorithmic description of the proposed branch-and-bound scheme is presented in \textbf{Algorithm}~1. In the following, we explain the algorithm in more detail.

\textit{1) Selection and Branching:} In each iteration $i$ of the optimal algorithm, i.e., in Line 3 of \textbf{Algorithm} 1, we start by selecting the box $\mathcal{B}^{(i)}$ that has the lowest lower bound from the set of available boxes $\mathcal{L}$ as follows:  
\begin{equation}
\mathcal{B}^{(i)}=\mathop{\arg\min}_{{\cal B}\in{\cal L}}\bar{\Phi}(\mathbf{\underline{u}}).
\end{equation}
After selecting a box $\mathcal{B}^{(i)}=[\mathbf{\overline{u}}^{(i)}\; \mathbf{\underline{u}}^{(i)}]$, we bisect the longest edge of $\mathcal{B}^{(i)}$. We first calculate
\begin{equation}
\tilde{j}=\arg \max_{j=1\dots,L}\{[\underline{u}^{(i)}_{j}- \overline{u}^{(i)}_{j}]\},
\end{equation}
then, $\mathcal{B}^{(i)}$ is partitioned into two new boxes as follows\cite{Cheonmono}:   
\begin{equation}\label{bis}
\mathcal{B}^{(i)}_{1}=[\mathbf{\underline{u}}^{(i)}, \mathbf{\overline{u}}^{(i)}-\bigg(\frac{{\overline{u}}^{(i)}_{\tilde{j}}-{\underline{u}}^{(i)}_{\tilde{j}}}{2}\bigg)\mathbf{e}_{\tilde{j}}],\quad
\mathcal{B}^{(i)}_{2}=[\mathbf{\underline{u}}^{(i)}+\bigg(\frac{{\overline{u}}^{(i)}_{\tilde{j}}-{\underline{u}}^{(i)}_{\tilde{j}}}{2}\bigg)\mathbf{e}_{\tilde{j}}, \mathbf{\overline{u}}^{(i)}],
\end{equation}
where $\mathbf{e}_{\tilde{j}} \in \mathbb{R}^{L}$ is a vector whose $\tilde{j}$-th element is equal to one and the remaining elements are zero. The bisection rule in (\ref{bis}) guarantees that the branching process is exhaustive \cite{global,Cheonmono,biofast} and the algorithm converges to the optimal solution. 
%\begin{figure}[t]
%	\centering
%	\scalebox{0.9}{
%		\pstool{Figs/brb2.eps}{
%			\psfrag{x}[c][c][0.8]{1) Select}
%			\psfrag{y}[c][c][0.8]{2) Branch}
%			\psfrag{z}[c][c][0.8]{3) Feasability check}
%		    \psfrag{j}[c][c][0.8]{Set boundary}
%		    \psfrag{a}[c][c][0.8]{$\mathbf{\underline{u}}^{(i)}$}
%		    \psfrag{b}[c][c][0.8]{$\mathbf{\overline{u}}^{(i)}$}
%		    \psfrag{e}[c][c][0.8]{$\mathbf{\underline{u}}^{(i)}_{1}$}
%		    \psfrag{f}[c][c][0.8]{$\mathbf{\underline{u}}^{(i)}_{2}$}
%		    \psfrag{g}[c][c][0.8]{$\mathbf{\overline{u}}^{(i)}_{1}$}
%		    \psfrag{h}[c][c][0.8]{$\mathbf{\overline{u}}^{(i)}_{2}$}
%		    \psfrag{l2}[c][c][0.8]{$\mathbf{\underline{u}}^{(i)}_{1}$}
%		    \psfrag{l3}[c][c][0.6]{$L_{\mathcal{B},1}^{(i)}=\bar{\Phi}(\mathbf{\underline{u}}_{1}^{(i)}) $}
%		    \psfrag{l4}[c][c][0.8]{$L_{\mathcal{B},2}^{(i)}=\bar{\Phi}(\mathbf{\underline{u}}_{2}^{(i)}) $}
%		    \psfrag{l1}[c][c][0.8]{$\mathbf{\underline{u}}^{(i)}_{2}$}
%		    \psfrag{u2}[c][c][0.8]{$\mathbf{\overline{u}}^{(i)}_{1}$}
%		    \psfrag{u1}[c][c][0.8]{$\mathbf{\overline{u}}^{(i)}_{2}$}
%		    
%	}}
%	\caption{The main steps of the proposed branch-and-bound algorithm in a two dimensional space.}
%	\label{brb}
%	\vspace{-0.5cm}
%\end{figure}

\textit{2) Feasibility Check:} After the two new boxes $\mathcal{B}^{(i)}_{1}=[\mathbf{\underline{u}}_{1}^{(i)}\;\; \mathbf{\overline{u}}_{1}^{(i)}]$ and $\mathcal{B}^{(i)}_{2}=[\mathbf{\underline{u}}_{2}^{(i)}\;\; \mathbf{\overline{u}}_{2}^{(i)}]$ are generated, we check the lower and upper corners of each box and verify whether these boxes are feasible or not, see Lines 4-20. To do so, we first calculate local lower bounds $L_{\mathcal{B},b}^{(i)}=\bar{\Phi}(\mathbf{\underline{u}}_{b}^{(i)}), \forall b=\{1,2\}$ for $\mathcal{B}^{(i)}_{1}$ and $\mathcal{B}^{(i)}_{2}$, respectively, see Line 7. Subsequently, we compare the values of the local lower bounds $L_{\mathcal{B},b}^{(i)}, \forall b=\{1,2\}$ with the  best global value $C_{BV}$ obtained so far. If the local lower bound of one of the two new boxes is greater than $C_{BV}$, then this box can be removed. On the other hand, if the local lower bound is smaller than $C_{BV}$, we check the feasibility of the box and search for better feasible points. To do so, we first check the lower corners of each box by checking the feasibility of (\ref{Op34b}). If the lower corners are feasible, then, these lower corners will be added to the set of feasible solutions  $\mathcal{S}$ and we update the current best value $C_{BV}$. Otherwise, if this condition is not satisfied, we check if the box contains feasible solutions. The box is not feasible if $\mathbf{\underline{u}}^{(i)} \notin \mathcal{G}$ or $\mathbf{\overline{u}}^{(i)} \notin	\mathcal{H}$. In this case, we remove the infeasible box in the next step of the algorithm, i.e., in the pruning step. 

\begin{remark}
Although variables $\boldsymbol{\zeta}$ and $\boldsymbol{f}$ are convex variables, we branch over them. In fact, this facilitates the optimal algorithm design and reduces the total computation time needed for finding the optimal solution as it eliminates the use of convex software solvers which would contribute significantly to the overall computation time. 
\end{remark}

\textit{3) Bounding and Pruning:} The bounding and pruning steps are described in the following:

\textit{Bounding:} The problem is to find upper and lower bounds for $\bar{\Phi}(\mathbf{u})$ over the set $\mathcal{{G}} \cap \mathcal{H}$ for a given box $\mathcal{B}=[\mathbf{\underline{u}}\;\; \mathbf{\overline{u}}]$. Due to the monotonicity  of $\bar{\Phi}(\cdot)$ we can obtain the upper and lower bounds as $\bar{\Phi}(\mathbf{\overline{u}})$ and $\bar{\Phi}(\mathbf{\underline{u}})$, respectively.

\textit{Pruning:}  In the pruning step infeasible boxes are removed. These boxes have local lower bounds greater than the current best global value, i.e., 
$L_{\mathcal{B},b}^{(i)}>C_{BV}, \forall b$, and the original branched box in iteration $i$, i.e., $\mathcal{B}^{(i)}$. This step is performed to reduce memory consumption and to achieve faster convergence.
\subsection{Complexity Analysis}
For sufficiently large number of iterations $I_{\text{max}}$,  \textbf{Algorithm} 1 is guaranteed to find the optimal solution to optimization problem (\ref{Op1}). Its convergence can be proved using the same arguments as those in \cite{Tuy,8008832,Cheonmono}. However, the computational complexity of \textbf{Algorithm} 1 is exponential in the number of variables of the optimization problem. Thus, the complexity order of \textbf{Algorithm} 1 is $\mathcal{O}(2^{L})$. Due to its high complexity, the proposed optimal resource allocation algorithm cannot be used in real time applications, especially for URLLC systems. However, it provides a  valuable performance benchmark for  low-complexity suboptimal algorithms. Thus, in the next section, we focus on developing low-complexity resource allocation algorithms based on SCA to strike a balance between computational complexity and performance. 
\begin{algorithm}[t]
		\label{BRB1}
	\caption{Branch-and-bound algorithm}
	\begin{algorithmic}[1]
		\STATE {\textbf{Initialization:}} Ensure $\mathbf{\underline{u}}^{(0)} \in \mathcal{G} \setminus	 \mathcal{H}$ and $\mathbf{\overline{u}}^{(0)} \in \mathcal{H}$. Set $\mathcal{B}^{(0)}=[\mathbf{\underline{u}}^{(0)}, \mathbf{\overline{u}}^{(0)}]$,
%		 set error tolerances  $\delta$ and $\rho$, 
		 $\mathcal{L}=\{\mathcal{B}^{(0)}\}$, $L_{\mathcal{B}}(\mathcal{B}^{(0)})=\bar{\Phi}(\mathbf{\underline{u}}^{(0)})$, $C_{BV}(\mathcal{B}^{(0)})=\bar{\Phi}(\mathbf{\overline{u}}^{(0)})$, $\mathcal{S}$ denotes a set of feasible solutions, and maximum iteration number $I_{\text{max}}$.
		\STATE \textbf{for} $\;\;i=1: I_{\text{max}}$ 	
		\STATE {\textbf{Selection and branching:}} Select box $\mathcal{B}^{(i)}=[\mathbf{\underline{u}}^{(i)}\; \mathbf{\overline{u}}^{(i)}] \in \mathcal{L}$ such that $
		\mathcal{B}^{(i)}=\mathop{\arg\min}_{{\cal B}\in{\cal L}}\bar{\Phi}(\mathbf{\underline{u}})$ and branch it into two new boxes $\mathcal{B}^{(i)}_{1}$ and $\mathcal{B}^{(i)}_{2}$.\\
		\STATE \textbf{Feasibility check of the two new boxes:}
		\STATE  ${\textbf{for} \;\; b=1:2}$	
		\STATE $\;\;\;\;$
        suppose $\mathcal{B}_{b}^{(i)}=[\mathbf{\underline{u}}_{b}^{(i)}\; \mathbf{\overline{u}}_{b}^{(i)}]$
		\STATE $\;\;\;\;$ calculate local lower bound $L_{\mathcal{B},b}^{(i)}$ for $\mathcal{B}_{b}^{(i)}$ 
		\STATE $\;\;$ \textbf{if} ($L_{\mathcal{B},b}^{(i)}<C_{BV}$)
		\STATE $\;\;\;\;\;\;$  check the feasibility of lower corner $\mathbf{\underline{u}}_{b}^{(i)}$ by solving (\ref{Op34b}) 
		\STATE $\qquad$ \textbf{if} lower corner $\mathbf{\underline{u}}_{b}^{(i)}$ is feasible
		\STATE $\qquad\quad$  update $C_{BV}=L_{\mathcal{B},b}^{(i)}$  and store the feasible solution $\mathbf{\underline{u}}_{b}^{(i)}$, i.e., $\mathcal{S}\leftarrow \mathbf{\underline{u}}_{b}^{(i)}$, 
		\STATE $\;\;\;\;\;\;\;\;$ \textbf{else} 
		\STATE $\qquad\;\;$ \textbf{if} $\mathbf{\underline{u}}_{b}^{(i)} \in \mathcal{{G}}$ and $\mathbf{\overline{u}}_{b}^{(i)} \in \mathcal{H}$
		\STATE $\qquad\quad$ the box may be feasible, i.e., may contain feasible solutions 
		\STATE $\qquad\;\;$ \textbf{else}
	\STATE $\qquad\quad$	the box is not feasible and cannot contain any feasible solution 
		\STATE $\;\;\;\;\;\;\;\;\;\;$\textbf{end if}
		\STATE $\qquad$\textbf{end if}
		\STATE $\;\;$\textbf{end if}
		\STATE \textbf{end for}
		\STATE \textbf{Bounding and Pruning:} Update the set of boxes $\mathcal{L}$ for the next iteration of the algorithm  
		\STATE $\;\;$ \textbf{for} each $\mathcal{B} \in \mathcal{L}$ do
		\STATE $\quad$ \textbf{if} $L_{\mathcal{B},b}^{(i)}>C_{BV}$
		\STATE $\quad\;\;$  Remove $\mathcal{B}_{b}^{(i)}$
		\STATE $\quad$ \textbf{end if}
		\STATE $\;\;\;\;$ remove the branched box $(\mathcal{L} \leftarrow \mathcal{L}\setminus \mathcal{B}^{(i)})$ and remove infeasible boxes
%		\STATE $\;\;$ \textbf{end if}   
		\STATE $\;\;$ \textbf{end for}
		\STATE $i \leftarrow i + 1$
		\STATE \textbf{end for}
		\STATE \textbf{Output:} Optimal solution $\mathbf{u}^{*}$. 
	\end{algorithmic}
\end{algorithm}    
\section{SCA-Based Suboptimal Solutions}
In this section, we propose two low-complexity suboptimal algorithms based on SCA.
\subsection{Proposed SCA-Based Suboptimal Scheme 1}
In this sub-section, we propose a suboptimal algorithm that tackles the non-convexity of (\ref{Op1}) in three main steps. First, we use the Big-M formulation to linearize the product terms $s_{k}^{j}[m^{j},n^{j}] {p}_{k}^{j}[m^{j},n^{j}], \forall k,m^{j},n^{j}, \forall j$. Then, we employ difference of convex (DC) programming and SCA methods to find a locally optimal solution of optimization problem (\ref{Op1}).
      
 \textit{1) Big-M Formulation:} Let us first introduce the new optimization variables\footnote{For more details on the big M-formulation,
 	please refer to \cite[Section~2.3]{Leemixed}.} 
\begin{align}&\label{prod2}
\bar{p}_{k}^{j}[m^{j},n^{j}]=s_{k}^{j}[m^{j},n^{j}] {p}_{k}^{j}[m^{j},n^{j}], \forall k,m^{j},n^{j}, \forall j.
\end{align}
Now, we decompose the product term in (\ref{prod2}) using the Big-M formulation and impose the following additional constraints\cite{Leemixed}:
\begin{align}&
\mbox {C16}: \bar{p}^{u}_{k}[m^{u},n^{u}]\leq P_{k,\text{max}} s^{u}_{k}[m^{u},n^{u}], \forall k,m^{u},n^{u}, \;\;
\mbox {C17}: \bar{p}^{u}_{k}[m^{u},n^{u}]\leq p_{k}^{u}[m^{u},n^{u}], \forall k,m^{u},n^{u},\\ &  \mbox {C18}: \bar{p}_{k}^{u}[m^{u},n^{u}]\geq p_{k}^{u}[m^{u},n^{u}] -(1-s_{k}^{u}[m^{u},n^{u}])P_{k,\text{max}}, \forall k,m^{u},n^{u},\quad  \\ & \mbox {C19}: \bar{p}_{k}^{u}[m^{u},n^{u}]\geq 0,  \forall k,m^{u},n^{u},\;\; \mbox {C20}: \bar{p}^{d}_{k}[m^{d},n^{d}]\leq P_{\text{max}} s^{d}_{k}[m^{d},n^{d}],  \forall k,m^{d},n^{d}, \\ &  \mbox {C21}: \bar{p}^{d}_{k}[m^{d},n^{d}]\leq p_{k}^{d}[m^{d},n^{d}],  \forall k,m^{d},n^{d}, \;\; \mbox {C22}: \bar{p}_{k}^{d}[m^{d},n^{d}]\geq 0, \forall k,m^{d},n^{d}.\\ &  \mbox {C23}: \bar{p}_{k}^{d}[m^{d},n^{d}]\geq p_{k}^{d}[m^{d},n^{d}]-(1-s_{k}^{d}[m^{d},n^{d}])P_{\text{max}},    \forall k,m^{d},n^{d}.
\end{align}
In this manner, the non-convex product term $s_{k}^{j}[m^{j},n^{j}] {p}_{k}^{j}[m^{j},n^{j}], \forall k,m^{j},n^{j}, \forall j$ in (\ref{prod2}) is transformed into a set of convex linear inequalities. Note that constraints $\mbox {C16-C23}$ do not change the feasible set. Now, optimization problem (\ref{Op1}) is transformed into the following equivalent form: 
\begin{align}\label{Op2b}
	&\underset {\boldsymbol{f}, \mathbf{s}^{u}, \mathbf{p}^{u},\mathbf{s}^{d}, \mathbf{p}^{d},\mathbf{\bar{p}}^{u}, \mathbf{\bar{p}}^{d},  \boldsymbol{\alpha}}{\text{minimize}}\bar{\bar{\Phi}}\\
	&\text{s.t.} \; \overline{\mathrm{C1}}:	\bar{F}^{u}_{k}(\bar{\mathbf{p}}_{k}^{u})-\bar{V}^{u}_{k}(\bar{\mathbf{p}}_{k}^{u})\geq (1-\alpha_{k}) B_{k},\forall k,\;\overline{\mathrm{C2}}:		\bar{F}^{d}_{k}(\bar{\mathbf{p}}_{k}^{d})-\bar{V}^{d}_{k}(\bar{\mathbf{p}}_{k}^{d})\geq (1-\alpha_{k})\Gamma_{k}B_{k} ,\forall k, \mathrm{C3-C8},\nonumber\\&   \quad\;\;
  \mathrm{C9}: \sum_{m^{u}=1}^{M^{u}}\sum_{n^{u}=1}^{N^{u}}\bar{p}^{u}_{k}[m^{u},n^{u}] \leq P_{k,\text{max}}, \forall k, \; \mathrm{C10}, 	 \mathrm{C11}: \sum_{k=1}^{K} \sum_{m^{d}=1}^{M^{d}}\sum_{n^{d}=1}^{N^{d}}\bar{p}^{d}_{k}[m^{d},n^{d}] \leq P_{\text{max}}, \forall k, \;
	\mathrm{C12-C23}.\nonumber
\end{align}
where \vspace{-0.25cm}
\begin{align}&\label{obj1}
\bar{\bar{\Phi}}=\sum_{k=1}^{K}w_{k}\bigg(\kappa f_{k}^{3}+\delta_{k}\sum_{m^{u}=1}^{M^{u}}\sum_{n^{u}=1}^{N^{u}}\bar{p}^{u}_{k}[m^{u},n^{u}]+(1-\alpha_{k})P_{\text{cir}}\bigg)+\delta_{\textrm{BS}}\sum_{k=1}^{K}\sum_{m^{d}=1}^{M^{d}}\sum_{n^{d}=1}^{N^{d}}\bar{p}^{d}_{k}[m^{d},n^{d}],
\end{align} \vspace{-0.75cm}
\begin{align}&\label{c1}
\bar{F}^{j}_{k}(\mathbf{\bar{p}}_{k}^{j})=\sum_{m^{j}=1}^{M^{j}}\sum_{n^{j}=1}^{N^{j }}\log_{2}(1+\bar{\gamma}^{j}_{k}[m^{j},n^{j}]),\;\;
\bar{V}^{j}_{k}(\mathbf{\bar{p}}_{k}^{j})=aQ^{-1}(\epsilon^{j}_{k})\sqrt{\sum_{m^{j}=1}^{M^{j}}\sum_{n^{j}=1}^{N^{j}}\bar{\nu}^{j}_{k}[m^{j},n^{j}]},
\end{align} 
$\bar{\gamma}^{j}_{k}[m^{j},n^{j}]=g^{j}_{k}[m^{j}]\bar{p}^{j}_{k}[m^{j},n^{j}],$ and
$\bar{\nu}^{j}_{k}[m^{j},n^{j}]=(1-(1+\bar{\gamma}^{j}_{k}[m^{j},n^{j}])^{-2}).$ Moreover, $\mathbf{\bar{p}}_{k}^{j}, \forall j$ are the collection of optimization variables $\bar{p}_{k}^{}[m^{j},n^{j}], \forall m^{j}, n^{j},$ and $\mathbf{\bar{p}}^{j},$ are the collection of optimization variables $\mathbf{\bar{p}}_{k}^{j}, \forall k,$ where $j \in \{u,d\}$.

 \textit{2) DC Programming:} The two remaining difficulties for solving problem (\ref{Op2b}) are the binary variables in constraints $\mathrm{C6}$, $\mathrm{C8}$, and $\mathrm{C14}$ and the structure of the achievable rate for FBT in $\mathrm{C1}$ and $\mathrm{C2}$. To tackle these issues, we employ a DC programming approach\cite{ghanem1,yan,kwan1,Joinoptimization}. To this end, the integer constraints  in (\ref{Op2b}) are rewritten in the following DC function forms:\vspace{-0.5cm}
\begin{equation}\label{eq1}
\mathrm{C6a}: 0 \leq s^{u}_{k}[m^{u},n^{u}] \leq 1, \forall k,m^{u},n^{u}, 
\quad
\mathrm{C6b}:  E^{u}(\mathbf{s}^{{u}}) -H^{u}(\mathbf{s}^{{u}}) \leq 0,
\end{equation}
\begin{equation}\label{eq2}
\mathrm{C8a}: 0 \leq s^{d}_{k}[m^{d},n^{d}] \leq 1, \forall k,m^{d},n^{d}, \quad
\mathrm{C8b}:  E^{d}(\mathbf{s}^{{d}}) -H^{d}(\mathbf{s}^{{d}}) \leq 0,
\end{equation}
\begin{equation}\label{eq3}\hspace{0.3cm}
\mathrm{C14a}: 0 \leq \alpha_{k} \leq 1, \forall k, \qquad\qquad\quad\qquad
\mathrm{C14b}:  E^{\alpha}(\boldsymbol{\alpha}) -H^{\alpha}(\boldsymbol{\alpha}) \leq 0,
\end{equation}
where\vspace{-0.4cm}
\begin{align}&\label{esu}
E^{j}(\mathbf{s}^{{j}})=\sum_{k=1}^{K}\sum_{m^{j}=1}^{M^{j}}\sum_{n^{j}=1}^{N^{j}}s^{j}_{k}[m^{j},n^{j}], \forall j, \qquad H^{j}(\mathbf{s}^{{j}})=\sum_{k=1}^{K}\sum_{m^{j}=1}^{M^{j}}\sum_{n^{j}=1}^{N^{j}}(s^{j}_{k}[m^{j},n^{j}] )^{2}, \forall j.    
\end{align} 
and
\begin{equation}\label{eqb3}
E^{\alpha}(\boldsymbol{\alpha})=\sum_{k=1}^{K}\alpha_{k},\qquad\qquad\quad\qquad
H^{\alpha}(\boldsymbol{\alpha})= \sum_{k=1}^{K} \alpha_{k}^{2}.
\end{equation}
Now, constraints $\mathrm{C6}$, $\mathrm{C8}$, and $\mathrm{C14}$ are equivalently formulated in continuous form, cf. $\mathrm{C6a}$, $\mathrm{C8a}$, and $\mathrm{C14a}$. However, constraints $\mathrm{C6b}$, $\mathrm{C8b}$, and $\mathrm{C14b}$ are still non-convex, i.e., reverse convex constraints. In order to handle them, we introduce the following lemma.
\begin{lem}\label{lemmapena}
	For sufficiently large constant values $\eta_{1}$, $\eta_{2}$, and $\eta_{3}$, problem (\ref{Op2b}) is equivalent to the following problem:\vspace{-0.75cm}
	\begin{align}\label{Op3}& \underset {\boldsymbol{f},\mathbf{s}^{u}, \mathbf{p}^{u},\mathbf{s}^{d}, \mathbf{p}^{d},\mathbf{\bar{p}}^{u}, \mathbf{\bar{p}}^{d},  \boldsymbol{\alpha}}{ \mathop {\mathrm {minimize}}\nolimits }~\bar{\bar{\Phi}}+{\beta}(\mathbf{s}^{u},\mathbf{s}^{d},\boldsymbol{\alpha})
		\\&\;\; \mathrm {s.t.}~\overline{\mathrm{C1}},\overline{\mathrm{C2}}, \mathrm{C3-C5, C6a, C7, C8a, C9, C10-C15, C14a, C17-C23},\nonumber
	\end{align} 
where \vspace{-1cm} 
\begin{align}&\label{betabar}\hspace{-1cm} 
{\beta}(\mathbf{s}^{u},\mathbf{s}^{d},\boldsymbol{\alpha})=\eta_{1}(E^{u}(\mathbf{s}^{{u}})-H^{u}(\mathbf{s}^{{u}}))+\eta_{2}(E^{d}(\mathbf{s}^{{d}})-H^{d}(\mathbf{s}^{{d}}))+\eta_{3}(E^{\alpha}(\boldsymbol{\alpha})-H^{\alpha}(\boldsymbol{\alpha})).
\end{align}    
\end{lem}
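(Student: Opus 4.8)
The plan is to prove this via an \emph{exact penalty} argument: I would show that for sufficiently large $\eta_1,\eta_2,\eta_3$ the penalty term $\beta(\mathbf{s}^{u},\mathbf{s}^{d},\boldsymbol{\alpha})$ added to the objective drives the relaxed variables back to binary values, so that the constrained problem (\ref{Op2b}) and the penalized problem (\ref{Op3}) attain the same optimal value and share the same optimizers. The starting observation is the sign structure of the penalty. For any $s\in[0,1]$ we have $s-s^{2}=s(1-s)\ge 0$, with equality if and only if $s\in\{0,1\}$. Hence, on the relaxed feasible region defined by $\mathrm{C6a}$, $\mathrm{C8a}$, and $\mathrm{C14a}$, each of $E^{u}(\mathbf{s}^{u})-H^{u}(\mathbf{s}^{u})$, $E^{d}(\mathbf{s}^{d})-H^{d}(\mathbf{s}^{d})$, and $E^{\alpha}(\boldsymbol{\alpha})-H^{\alpha}(\boldsymbol{\alpha})$ is nonnegative, so $\beta(\mathbf{s}^{u},\mathbf{s}^{d},\boldsymbol{\alpha})\ge 0$ with $\beta=0$ precisely when every entry of $\mathbf{s}^{u},\mathbf{s}^{d},\boldsymbol{\alpha}$ is binary, i.e.\ exactly on the feasible set of (\ref{Op2b}).

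Next I would establish the easy inequality between the two optimal values. Any point feasible for (\ref{Op2b}) has binary $\mathbf{s}^{u},\mathbf{s}^{d},\boldsymbol{\alpha}$ and therefore satisfies $\mathrm{C6b}$, $\mathrm{C8b}$, and $\mathrm{C14b}$ with equality; such a point is feasible for (\ref{Op3}) and incurs zero penalty, so the two objectives coincide there. Consequently the optimal value of (\ref{Op3}) is at most that of (\ref{Op2b}), since (\ref{Op3}) optimizes over a larger (relaxed) feasible set with an objective that agrees with $\bar{\bar{\Phi}}$ on the binary-feasible subset.

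The crux is the reverse inequality, for which I would invoke an exact-penalty theorem for DC reformulations of binary constraints, in the spirit of \cite{yan,ghanem1}. The relaxed feasible region is compact, being a box intersected with finitely many closed constraints; the objective $\bar{\bar{\Phi}}$ is continuous and bounded below; and $\beta$ is a continuous nonnegative function vanishing exactly on the binary-feasible subset. Under these conditions there exist finite thresholds $\eta_{1}^{*},\eta_{2}^{*},\eta_{3}^{*}$ such that for all $\eta_{i}>\eta_{i}^{*}$ every optimal solution of (\ref{Op3}) satisfies $\beta=0$: if an optimizer had strictly positive penalty, increasing the weights would raise its penalized objective without bound, whereas a binary-feasible point retains a finite value, contradicting optimality. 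Hence any optimizer of (\ref{Op3}) is binary-feasible, is therefore feasible for (\ref{Op2b}), and attains the identical objective value, which yields the matching reverse inequality and completes the equivalence.

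The main obstacle is making this last step rigorous, namely the existence of the finite thresholds. The delicate points are verifying the compactness of the relaxed feasible set in the presence of the nonconvex finite-blocklength rate constraints $\overline{\mathrm{C1}}$ and $\overline{\mathrm{C2}}$ (which are merely carried along and do not interact with the binary structure), and confirming that a binary-feasible point exists so that the two problems coincide in the limit of large penalty. I expect to discharge these by appealing to the standard exact-penalty machinery rather than deriving the thresholds $\eta_{i}^{*}$ explicitly, since their exact values are immaterial to the claimed equivalence.
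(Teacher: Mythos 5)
Your overall architecture (nonnegativity of the penalty, the easy inequality, then a reverse inequality for large $\eta_i$) is a genuinely different route from the paper, which does not argue via an abstract exact-penalty theorem at all: Appendix A treats $\eta_1,\eta_2,\eta_3$ as Lagrange multipliers for $\mathrm{C6b}$, $\mathrm{C8b}$, $\mathrm{C14b}$, establishes weak duality, proves strong duality by a case analysis (if the penalty terms were strictly positive at the dual optimum, the optimal multipliers and hence the dual value would be infinite, contradicting the finite upper bound $U^{*}$), and then uses monotonicity of the dual function $\Theta(\eta_1,\eta_2,\eta_3)$ to conclude that \emph{every} multiplier choice above the dual-optimal one attains $U^{*}$. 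That duality argument is precisely what substitutes for the "exact-penalty machinery" you defer to, and your version of that step has a genuine gap.

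Concretely, your argument for the reverse inequality --- "if an optimizer had strictly positive penalty, increasing the weights would raise its penalized objective without bound" --- applies to a \emph{fixed} point, but the optimizer of (\ref{Op3}) changes with $\eta_i$. Nothing in continuity plus compactness prevents the optimizers $x_{\eta}$ from having penalty $\beta(x_{\eta})>0$ shrinking like $1/\eta^{2}$, so that $\eta\,\beta(x_{\eta})\to 0$ and the penalized value stays strictly below the binary optimum for every finite $\eta$; in that scenario no finite threshold exists. This is not hypothetical: minimizing $-\sqrt{x(1-x)}$ over $x\in\{0,1\}$ has optimal value $0$, yet the penalized relaxation $\min_{x\in[0,1]} -\sqrt{x(1-x)}+\eta\, x(1-x)$ has value $-1/(4\eta)<0$ with a non-binary minimizer for every finite $\eta$. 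Exactness therefore requires a quantitative condition (Lipschitz objective plus an error bound relating $\beta$ to the distance to the feasible set of (\ref{Op2b})), and here your remark that $\overline{\mathrm{C1}}$ and $\overline{\mathrm{C2}}$ "are merely carried along and do not interact with the binary structure" is exactly where this fails: a near-binary optimizer of (\ref{Op3}) satisfies the FBT, causality, big-M, and exclusivity constraints, but its rounding to the nearest binary point need not, so the distance that matters is to the full feasible set of (\ref{Op2b}), not to the binary lattice. The paper's duality proof sidesteps all of this, which is why it (together with the monotonicity of $\Theta$ in the multipliers) is the step you would need to reproduce to close the argument.
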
           
\begin{proof}
%The proof follows similar steps as corresponding proofs in \cite{gha4,ghanem1, yan,kwan1}. Interested readers are referred to [2] which is an extended version of this paper and includes the full proof.
Please refer to Appendix A.
\end{proof}
Constants $\eta_{1}$, $\eta_{2}$, and $\eta_{3}$ act as penalty factors 
to penalize the objective function for any $s_{k}^{j}[m^{j},n^{j}]$ that is not equal to 0 or 1.  
The remaining sources of non-convexity are the structure of the achievable rate for FBT and the non-convex objective function. In the following, we employ SCA to approximate problem (\ref{Op3}) by a convex problem. Subsequently, we propose an iterative algorithm to find a low-complexity solution.      

 \textit{3) SCA:}
In order to tackle the remaining non-convexity of (\ref{Op3}), we employ the Taylor series approximation to approximate the non-convex parts of the objective function and constraints $\overline{\mathrm{C1}}$ and $\overline{\mathrm{C2}}$. Since $H^{j}(\mathbf{s}^{{j}}),\forall j$, $-{V}^{j}_{k}(\mathbf{\bar{p}}^{j}_{k}),\forall j$, and $H^{\alpha}(\boldsymbol{\alpha})$ are differentiable convex functions, then for any feasible points $\mathbf{s}^{{j(i)}}, \mathbf{\bar{p}}^{j(i)}_{k}, \forall j$, and $\boldsymbol{\alpha}^{(i)}$, where the superscript $i$ denotes the SCA iteration index, the following inequalities hold:          
\begin{align}&\label{htayslorhu}
H^{j}(\mathbf{s}^{{j}}) \ge\bar{H}^{j}(\mathbf{s}^{{j}},\mathbf{s}^{{j(i)}})= H^{j}(\mathbf{s}^{{j(i)}}) +\nabla _{\mathbf {s}^{j}} H^{j}(\mathbf {s}^{j(i)})^{T}(\mathbf{s}^{{j}}-\mathbf{s}^{{j(i)}}), \forall j, \\&
\label{vua}
{V}^{j}_{k}(\mathbf{\bar{p}}^{j}_{k}) \leq \bar{V}^{j}_{k}(\mathbf{\bar{p}}^{j}_{k},\mathbf{\bar{p}}^{j(i)}_{k}) =  {V}^{j}_{k}(\mathbf{\bar{p}}^{j(i)}_{k})+ \nabla_{\mathbf{\bar{p}}^{j}_{k}}{V}_{k}(\mathbf{\bar{p}}^{j(i)}_{k})^{T}(\mathbf{\bar{p}}^{j}_{k}-\mathbf{\bar{p}}^{j(i)}_{k}), \forall j.
\end{align}
and \vspace{-0.5cm}
\begin{align}&\label{halph}
H^{\alpha}(\boldsymbol{\alpha}) \ge 	\bar{H}^{\alpha}(\boldsymbol{\alpha},\boldsymbol{\alpha}^{(i)})= H(\boldsymbol{\alpha}^{(i)})  +\nabla _{\boldsymbol{\alpha}} H(\boldsymbol{\alpha}^{(i)})^{T}(\boldsymbol{\alpha}-\boldsymbol{\alpha}^{(i)}).
\end{align}
The right hand sides of (\ref{htayslorhu}), (\ref{vua}), and (\ref{halph}) are affine functions representing the global underestimation of $H^{j}(\mathbf{s}^{{j}}), \forall j$,  ${V}^{j}_{k}(\mathbf{\bar{p}}^{j}_{k}),\forall j$, and $H^{\alpha}(\boldsymbol{\alpha})$, respectively, where $\nabla _{\mathbf {s}^{j}} H^{j}(\mathbf {s}^{j(i)})$ and $\nabla_{\mathbf{\bar{p}}^{j}_{k}}{V}_{k}(\mathbf{\bar{p}}^{j(i)}_{k})$ are the gradients of $H^{j}(\mathbf{s}^{{j}})$ and ${V}^{j}_{k}(\mathbf{\bar{p}}^{j}_{k})$, respectively.
%The right hand sides of (\ref{htayslorhu}), (\ref{vua}), and (\ref{halph}) are affine functions representing the global underestimation of $H^{j}(\mathbf{s}^{{j}}), \forall j$,  ${V}^{j}_{k}(\mathbf{\bar{p}}^{j}_{k}),\forall j$, and $H^{\alpha}(\boldsymbol{\alpha})$, respectively, where $\nabla _{\mathbf {s}^{j}} H^{j}(\mathbf {s}^{j(i)})^{T}(\mathbf{s}^{{j}}-\mathbf{s}^{{j(i)}})$ and $\nabla_{\mathbf{\bar{p}}^{j}_{k}}{V}_{k}(\mathbf{\bar{p}}^{j(i)}_{k})^{T}(\mathbf{\bar{p}}^{j}_{k}-\mathbf{\bar{p}}^{j(i)}_{k})$ are given as follows:  
%\begin{align}&\hspace{-0.15cm}
%\nabla_{\mathbf{s}^{j}}H^{j}(\mathbf{s}^{j({i})})^{T}(\mathbf{s}^{j}-\mathbf{s}^{j({i})})  =\sum_{k=1}^{K}\sum_{m^{j}=1}^{M^{j}}\sum_{n^{j}=1}^{N^{j}}2s^{j({i})}_{k}[m^{j},n^{j}]\left( s^{j}_{k}[m^{j},n^{j}]-s^{j(i)}_{k}[m^{j},n^{j}]\right),\forall j,\\&
%\nabla_{\boldsymbol{\alpha}}H(\boldsymbol{\alpha}^{({i})})^{T}(\boldsymbol{\alpha}-\boldsymbol{\alpha}^{({i})})   =\sum_{k=1}^{K}2\alpha^{({i})}_{k}\left( \alpha_{k}-\alpha^{({i})}_{k}\right),\\&\vspace{-0.5cm} \nabla_{\mathbf{\bar{p}}_{k}^{j}}\bar{V}^{j}_{k}(\mathbf{\bar{p}}_{k}^{j({i})})
%= \frac{a
%	Q^{-1}(\epsilon^{j}_{k})}{\sqrt{\sum_{m^{j}=1}^{M^{j}}\sum_{n^{j}=1}^{N^{j}} \bar{V}^{j(i)}_{k}[m^{j},n^{j}]}}\begin{pmatrix} 
%\ \frac{g^{j}_{k}[1]}{(1+\bar{p}^{j(i)}_{k}[1,1]g^{j}_{k}[1])^{3}} \\
%\vdots \\
%\frac{g_{k}^{j}[M]}{(1+\bar{p}_{k}^{j(i)}[M,N]g_{k}^{j}[M])^{3}} \end{pmatrix}, \forall j, 
%\end{align} 
 By substituting the right hand sides of (\ref{htayslorhu})-(\ref{halph}) into (\ref{Op3}), we obtain the following optimization problem:    \vspace{-0.5cm}
\begin{align}&\label{op3a}
\underset {\boldsymbol{f},\mathbf{s}^{u}, \mathbf{p}^{u},\mathbf{s}^{d}, \mathbf{p}^{d},\mathbf{\bar{p}}^{u}, \mathbf{\bar{p}}^{d},  \boldsymbol{\alpha}}{ \mathop {\mathrm {minimize}}\nolimits }~\bar{\bar{\Phi}}+\bar{\beta}(\mathbf{s}^{u},\mathbf{s}^{u(i)},\mathbf{s}^{d},\mathbf{s}^{d(i)},\boldsymbol{\alpha},\boldsymbol{\alpha}^{(i)})	
\\&\;\; \mathrm {s.t.}~\overline{\overline{\mathrm{C1}}}:	F^{u}_{k}(\mathbf{\bar{p}}_{k}^{u})-\bar{V}^{u}_{k}(\mathbf{\bar{p}}_{k}^{u}, \mathbf{\bar{p}}^{u(i)}_{k})\geq (1-\alpha_{k}) B_{k},\forall k, \nonumber\\& \;\qquad \overline{\overline{\mathrm{C2}}}:		F^{d}_{k}(\mathbf{\bar{p}}_{k}^{d})-\bar{V}^{d}_{k}(\mathbf{\bar{p}}_{k}^{d}, \mathbf{\bar{p}}^{d(i)}_{k})\geq (1-\alpha_{k})\Gamma_{k}B_{k} \nonumber,\forall k, \;\;\mathrm{C3-C23},\nonumber
\end{align}  
where 
$
\bar{\beta}(\mathbf{s}^{u},\mathbf{s}^{u(i)},\mathbf{s}^{d},\mathbf{s}^{d(i)},\boldsymbol{\alpha},\boldsymbol{\alpha}^{(i)})=\eta_{1}(E^{u}(\mathbf{s}^{{u}})-\bar{H}^{u}(\mathbf{s}^{{u}},\mathbf{s}^{{u(i)}}))+\eta_{2}(E^{d}(\mathbf{s}^{{d}})-\bar{H}^{d}(\mathbf{s}^{{d}},\mathbf{s}^{{d(i)}}))+\eta_{3}({E}^{\alpha}(\boldsymbol{\alpha})-\bar{H}^{\alpha}(\boldsymbol{\alpha},\boldsymbol{\alpha}^{(i)})).$
Optimization problem (\ref{op3a}) is a convex optimization problem. To facilitate the application of CVX for solving problem (\ref{op3a}), we reformulate the cubic function $f_{k}^{3}$ appearing in the cost function and transform it into two equivalent SOC constraints\cite{cvx}. We first define new auxiliary variables $\bar{\zeta}_{k}, \forall k,$ to upper bound the cubic function  as follows $f_{k}^{3}\leq\bar{\zeta}_{k}, \forall k$. Then, as shown in \cite{cvx}, we can expand $f_{k}^{3}\leq\bar{\zeta}_{k}, \forall k,$ to the following  equivalent SOC constraints\cite{cvx}: 
\begin{IEEEeqnarray}{lll}\label{cone1} 
	\mathrm{C24}:	\begin{bmatrix}
		\bar{\zeta}_{k} & \bar{\theta}_{k} \\
		\bar{\theta}_{k} & f_{k}
	\end{bmatrix}\succeq	 0,\forall k, \qquad
\mathrm{C25}: 	\begin{bmatrix}
	\bar{\theta}_{k} & f_{k} \\
	f_{k} & 1
\end{bmatrix}\succeq	 0, \forall k,
\end{IEEEeqnarray} 
where $\bar{\theta}_{k}, \forall k,$ are new auxiliary variables.
Optimization problem  (\ref{op3a}) is transformed into the following equivalent form:\vspace{-0.75cm}
\begin{align}&\label{opf}
\underset {\boldsymbol{f},\mathbf{s}^{u}, \mathbf{p}^{u},\mathbf{s}^{d}, \mathbf{p}^{d},\mathbf{\bar{p}}^{u}, \mathbf{\bar{p}}^{d},  \boldsymbol{\alpha}, \boldsymbol{\bar{\zeta}}, \boldsymbol{\bar{\theta}}}{ \mathop {\mathrm {minimize}}\nolimits }~\bar{\bar{\bar{\Phi}}}+\bar{\beta}(\mathbf{s}^{u},\mathbf{s}^{u(i)},\mathbf{s}^{d},\mathbf{s}^{d(i)},\boldsymbol{\alpha},\boldsymbol{\alpha}^{(i)})	
\\&\;\; \mathrm {s.t.}~\overline{\overline{\mathrm{C1}}}, \overline{\overline{\mathrm{C2}}}, \mathrm{C3-C25}, \nonumber
\end{align}  
where \vspace{-0.5cm}
\begin{align}&\label{obj2}
\bar{\bar{\bar{\Phi}}}=\sum_{k=1}^{K}w_{k}\kappa \bar{\zeta}_{k}+\sum_{k=1}^{K}w_{k}\bigg(\delta_{k}\sum_{m^{u}=1}^{M^{u}}\sum_{n^{u}=1}^{N^{u}}\bar{p}^{u}_{k}[m^{u},n^{u}]+(1-\alpha_{k})P_{\text{cir}}\bigg)+\delta_{\textrm{BS}}\sum_{k=1}^{K}\sum_{m^{d}=1}^{M^{d}}\sum_{n^{d}=1}^{N^{d}}\bar{p}^{d}_{k}[m^{d},n^{d}],
\end{align} 
and $\boldsymbol{\bar{\zeta}}$ and $\boldsymbol{\bar{\theta}}$ are the collection of optimization variables $\bar{\zeta}_{k}, \forall k,$ and $\bar{\theta}_{k}, \forall k$, respectively. Optimization problem (\ref{opf}) is convex because the objective function is convex and can be efficiently solved by standard convex optimization solvers such as CVX \cite{cvx}. \textbf{Algorithm} \ref{sca1} summarizes the main steps for solving (\ref{Op3}) in an iterative manner, where the
solution of (\ref{opf}) in iteration ($i$) is used as the initial point for the next iteration $(i+1)$. By iteratively solving (\ref{opf}), \textbf{Algorithm} \ref{sca1} produces a sequence of
improved feasible solutions, which for sufficiently large  $I_{\text{max}}$  convergence to a local optimum
point of problem (\ref{Op3}) or equivalently problem (\ref{Op1}) in polynomial time, \cite{pccp,fastglobal}.
\begin{algorithm}[t]	
	\begin{algorithmic}[1]
	\STATE {Initialize:} Random initial points $\mathbf{s}^{u(1)}$, $\mathbf{s}^{d(1)}$, $\mathbf{\bar{p}}^{u(1)}$, $\mathbf{\bar{p}}^{d(1)}$, $\boldsymbol{\alpha}^{(1)}$. Set iteration index $i=1$, maximum number of iterations $I_{\text{max}}$, and penalty factors $\eta_{1} >0$, $\eta_{2} >0$, and $\eta_{3} >0$.\\
	\STATE \textbf{Repeat}\\
	\STATE Solve convex problem (\ref{opf}) for given  $\mathbf{s}^{u(i)}$, $\mathbf{s}^{d(i)}$, $\mathbf{\bar{p}}^{u(i)}$, $\mathbf{\bar{p}}^{d(i)}$, $\boldsymbol{\alpha}^{(i)}$, and store the intermediate solutions $\mathbf{s}^{u}$, $\mathbf{s}^{d}$, $\mathbf{\bar{p}}^{u}$, $\mathbf{\bar{p}}^{d}$, $\boldsymbol{\alpha}$\\
	\STATE Set ${i}={i}+1$ and update $\mathbf{s}^{u(i)}=\mathbf{s}^{u}$, $\mathbf{s}^{d(i)}=\mathbf{s}^{d}$,  $\mathbf{\bar{p}}^{u(i)}=\mathbf{\bar{p}}^{u}$,
	$\mathbf{\bar{p}}^{d(i)}=\mathbf{\bar{p}}^{d}$, $\boldsymbol{\alpha}^{(i)}=\boldsymbol{\alpha}$. \\
	\STATE \textbf{Until} $i=I_{\text{max}}$.\\
	\STATE {Output:} $\mathbf{s}^{u*}=\mathbf{s}^{u}$,
	$\mathbf{s}^{d*}=\mathbf{s}^{d}$,
	$\mathbf{\bar{p}}^{u*}=\mathbf{\bar{p}}^{u}$,
	$\mathbf{\bar{p}}^{d*}=\mathbf{\bar{p}}^{d}$, $\boldsymbol{\alpha}^{*}=\boldsymbol{\alpha}$.
\end{algorithmic}
\caption{Successive Convex Approximation}
\label{sca1}
\end{algorithm}    
\subsection{Proposed SCA-Based Suboptimal Scheme 2}
For suboptimal scheme 1, we have adopted the Big-M method to linearize non-convex product terms. However, this method introduced additional optimization variables and constraints, which negatively affect the complexity of \textbf{Algorithm} \ref{sca1}. In this subsection, we reduce the complexity of  suboptimal scheme 1 (\textbf{Algorithm} \ref{sca1}). To do so, we first approximate the dispersion in the high SNR regime as follows: 
\begin{align}\label{vda}
\widetilde{\nu}[i]=\bigg(1-\frac{1}{(1+\gamma[i])^2}\bigg) \approx 1,
\end{align} 
which is accurate when the received SNR $\gamma[i]$, exceeds $5$ dB as is typically the case in cellular networks, especially when supporting URLLC\cite{cShe,csunoptimizing,chsecross}. On the other hand, in the low SNR regime, by substituting ${\nu}[i]=\widetilde{\nu}[i] = 1$ in (\ref{normalapproximation}), we obtain a lower bound on the achievable rate. If the lower bound is used for optimization of the resource allocation in MEC systems, the feasibility of the obtained solution is guaranteed. Hence, exploiting this approximation, we rewrite the dispersion parts  for the uplink and downlink in optimization problem (\ref{Op1}) as follows:
\begin{align}\label{b3c}
\tilde{V}^{j}_{k}(\mathbf{s}^{j}_{k})=aQ^{-1}(\epsilon^{j}_{k})\sqrt{\sum_{m^{j}=1}^{M^{j}}\sum_{n^{j}=1}^{N^{j}}s^{j}_{k}[m^{j},n^{j}]}, \forall j=\{u,d\}.
\end{align} 
Now, defining  
$\tilde{p}^{j}_{k}[m^{j},n^{j}]=s^{j}_{k}[m^{j},n^{j}]p^{j}_{k}[m^{j},n^{j}], \forall k,m^{j},n^{j}, \forall j\in\{u,d\},$ as new optimization variables, and rewriting $\tilde{V}^{j}_{k}(\mathbf{s}^{j}_{k},\mathbf{p}^{j}_{k})$  in (\ref{b3c}) as $\tilde{V}^{j}_{k}(\mathbf{s}^{j}_{k})$,  optimization problem  (\ref{Op1}) can be transformed as follows:
\begin{align}\label{oaa}
&\underset {\boldsymbol{f}, \mathbf{s}^{u}, \mathbf{\tilde{p}}^{u}, \mathbf{s}^{d}, \mathbf{\tilde{p}}^{d}, \boldsymbol{\alpha}}{ \mathop {\mathrm {minimize}}\nolimits }~	{\tilde{\Phi}}
\\&\;\; \mathrm {s.t.}~ \widetilde{\mathrm{C1}}:	\widetilde{F}^{u}_{k}(\mathbf{s}_{k}^{u},\mathbf{\tilde{p}}_{k}^{u})-\widetilde{V}^{u}_{k}(\mathbf{s}_{k}^{u})\geq (1-\alpha_{k})B_{k},\forall k, \quad \widetilde{\mathrm{C2}}:\widetilde{F}^{d}_{k}(\mathbf{s}_{k}^{d},\mathbf{\tilde{p}}_{k}^{d})-\widetilde{V}^{d}_{k}(\mathbf{s}_{k}^{d})\geq (1-\alpha_{k})\Gamma_{k}B_{k} \nonumber,\forall k, \\&  \;\qquad	\mathrm{C3-C8},
\widetilde{\mathrm{C9}}: \sum_{m^{u}=1}^{M^{u}}\sum_{n^{u}=1}^{N^{u}}\tilde{p}^{u}_{k}[m^{u},n^{u}] \leq P_{k,\text{max}}, \forall k,
\widetilde{\mathrm{C10}}: \tilde{p}^{u}_{k}[m^{u},n^{u}] \geq 0, \forall k,m^{u}, n^{u},\nonumber\\& \;\qquad \widetilde{\mathrm{C11}}: \sum_{k=1}^{K}\sum_{m^{d}=1}^{M^{d}}\sum_{n^{d}=1}^{N^{d}}\tilde{p}^{d}_{k}[m^{d},n^{d}] \leq P_{\text{max}},\;\; \widetilde{\mathrm{C12}}:\; \tilde{p}^{d}_{k}[m^{d},n^{d}] \geq 0, \forall k, m^{d}, n^{d}, \quad \mathrm{C13-C15}.\nonumber
\end{align} 
where
\begin{align}&\label{obj3}
{\tilde{\Phi}}=	\sum_{k=1}^{K}w_{k}\big(\kappa f_{k}^{3}+\delta_{k}\sum_{m^{u}=1}^{M^{u}}\sum_{n^{u}=1}^{N^{u}}\tilde{p}^{u}_{k}[m^{u},n^{u}]+(1-\alpha_{k})P_{\text{cir}}\big)+\delta_{\textrm{BS}}\sum_{k=1}^{K}\sum_{m^{u}=1}^{M^{u}}\sum_{n^{d}=1}^{N^{d}}\tilde{p}^{d}_{k}[m^{u},n^{d}],
\end{align} 
$\mathbf{\tilde{p}}^{j}_{k}, \forall j,$ are the collection of optimization variables $\tilde{p}^{j}_{k}[m^{j},n^{j}],\forall k,m^{j},n^{j}, \forall j$, $\mathbf{\tilde{p}}^{j}, \forall j,$ denote the collection of optimization variables $\mathbf{\tilde{p}}^{j}_{k}, \forall k, \forall j$, and   
\begin{IEEEeqnarray}{lll}\label{fr}
	\widetilde{F}^{j}_{k}(\mathbf{s}_{k}^{j},\tilde{\mathbf{p}}_{k}^{j})=\sum_{m^{j}=1}^{M^{j}}\sum_{n^{j}=1}^{N^{j }}s^{j}_{k}[m^{j},n^{j}]\log_{2}\bigg(1+\frac{g^{j}_{k}[m^{j}]\tilde{p}^{j}_{k}[m^{j},n^{j}]}{s^{j}_{k}[m^{j},n^{j}]}\bigg), \forall j.
\end{IEEEeqnarray} 
Although $\widetilde{F}^{j}_{k}(\mathbf{s}_{k}^{j},\tilde{\mathbf{p}}_{k}^{j})$ is a concave function, optimization problem (\ref{oaa}) is not convex due to the non-convexity of constraints $\widetilde{\mathrm{C1}}$, $\widetilde{\mathrm{C2}}$, $\mathrm{C6}$, $\mathrm{C10}$, and $\mathrm{C14}$. To deal with non-convex constraints $\widetilde{\mathrm{C1}}$ and $\widetilde{\mathrm{C2}}$, we define new optimization variables $z_{k}, \forall k,$ and $q_{k}, \forall k$, and rewrite the constraint equivalently as follows:
\begin{align}
&\hspace{-0.35cm}\widetilde{\mathrm{C1a}}:\;F^{u}_{k}(\mathbf{s}_{k}^{u},\mathbf{\tilde{p}}_{k}^{u})-aQ^{-1}(\epsilon^{u}_{k})z_{k}\geq (1-\alpha_{k})B_{k},\forall k, \quad
\widetilde{\mathrm{C1b}}:z_{k} \geq \sqrt{\sum_{m^{u}=1}^{M^{u}}\sum_{n^{u}=1}^{N^{u}}(s^{u}_{k}[m^{u},n^{u}]})^{2},\\&\hspace{-0.35cm}
\widetilde{\mathrm{C2a}}:F^{d}_{k}(\mathbf{s}_{k}^{d},\mathbf{\tilde{p}}_{k}^{d})-aQ^{-1}(\epsilon^{d}_{k})q_{k}\geq (1-\alpha_{k})\Gamma_{k}B_{k},\forall k,\quad
\widetilde{\mathrm{C2b}}:q_{k} \geq \sqrt{\sum_{m^{d}=1}^{M^{d}}\sum_{n^{d}=1}^{N^{d}}(s^{d}_{k}[m^{d},n^{d}]})^{2}.
\end{align} 
Constraints $\widetilde{\mathrm{C1b}}$ and $\widetilde{\mathrm{C2b}}$ are rewritten in this form as for the optimal solution $s^{j}_{k}[m^{j},n^{j}]=(s^{j}_{k}[m^{j},n^{j}])^{2}$ holds. Constraints $\widetilde{\mathrm{C1a}}$, $\widetilde{\mathrm{C1b}}$, $\widetilde{\mathrm{C2b}}$, and $\widetilde{\mathrm{C2b}}$ span a convex set since constraints $\widetilde{\mathrm{C1b}}$ and $\widetilde{\mathrm{C2b}}$ can be represented as SOCs. To deal with constraints $\mathrm{C6}$, $\mathrm{C8}$, and $\mathrm{C14}$ and the cubic function present in optimization problem (\ref{oaa}), we use similar techniques as in suboptimal scheme 1. As a consequence, optimization problem (\ref{oaa}) is rewritten in the following equivalent form:  
\begin{align}\label{oaa2}
& \underset { \boldsymbol{f}, \mathbf{s}^{u}, \mathbf{\tilde{p}}^{u}, \mathbf{s}^{d}, \mathbf{\tilde{p}}^{d}, \boldsymbol{\alpha},\mathbf{z}, \mathbf{q},\boldsymbol{\bar{\zeta}}, \boldsymbol{\bar{\theta}}}{ \mathop {\mathrm {minimize}}\nolimits }~	{\tilde{\tilde{\Phi}}}+\nonumber\bar{\beta}(\mathbf{s}^{u},\mathbf{s}^{u(i)},\mathbf{s}^{d},\mathbf{s}^{d(i)},\boldsymbol{\alpha},\boldsymbol{\alpha}^{(i)})	
\\&\;\; \mathrm {s.t.}~\widetilde{\mathrm{C1a}},\widetilde{\mathrm{C1b}}, \widetilde{\mathrm{C2a}}, \widetilde{\mathrm{C2b}}, \mathrm{C3-C5}, \;
\mathrm{C6a}: s^{u}_{k}[m^{u},n^{u}] \in [0,1], \forall k,m^{u},n^{u}, \; \mathrm{C7},\\&  \; \qquad
 \mathrm{C8a}: s^{d}_{k}[m^{d},n^{d}] \in [0,1], \forall k,m^{d},n^{d}, \; \widetilde{\mathrm{C9}}-\mathrm{C13}, \;\nonumber
\\&  \; \qquad   	\mathrm{C14a}: \alpha_{k} \in [0,1], \forall k\nonumber,\mathrm{C15},\mathrm{C16}: 
\begin{bmatrix}
\bar{\theta}_{k} & f_{k} \\
f_{k} & 1
\end{bmatrix}\succeq	 0, \forall k, \quad
\mathrm{C17}:	\begin{bmatrix}
\bar{\zeta}_{k} & \bar{\theta}_{k} \\
\bar{\theta}_{k} & f_{k}
\end{bmatrix}\succeq	 0, \forall k,
\end{align}
where\vspace{-1cm}
\begin{align}&\label{obj}
{\tilde{\tilde{\Phi}}}=	\sum_{k=1}^{K}w_{k}\bigg(\kappa\zeta_{k} +\delta_{k}\sum_{m^{u}=1}^{M^{u}}\sum_{n^{u}=1}^{N^{u}}\tilde{p}^{u}_{k}[m^{u},n^{u}]+(1-\alpha_{k})P_{\text{cir}}\bigg)+\delta_{\textrm{BS}}\sum_{k=1}^{K}\sum_{m^{u}=1}^{M^{u}}\sum_{n^{d}=1}^{N^{d}}\tilde{p}^{d}_{k}[m^{u},n^{d}],
\end{align}  
and $\mathbf{z}$ and $\mathbf{q}$ are the collection of optimization variables $z_{k}, \forall k,$ and $q_{k}, \forall k,$ respectively. Optimization problem (\ref{oaa2}) is convex because the objective function is convex and the constraints span a convex set. Therefore, it can be efficiently solved by standard convex optimization solvers such as CVX \cite{cvx}. \textbf{Algorithm} \ref{sca2} summarizes the main steps for solving (\ref{oaa}) in an iterative manner, where the
solution of (\ref{oaa2}) in iteration ($i$) is used as the initial point for
the next iteration $(i+1)$. The algorithm produces a sequence of
improved feasible solutions until convergence to a local optimum
point of problem (\ref{oaa}). Unlike \textbf{Algorithm} 2, \textbf{Algorithm} \ref{sca2} does not provide a local optimum solution to problem (\ref{Op1}) because of the approximation of the dispersion term. Nevertheless, \textbf{Algorithm} \ref{sca2} provides an upper bound on the total system power consumption and the obtained solution is feasible for (\ref{Op1}). Moreover, this upper bound becomes tight for sufficiently high SNR, where the approximation in (\ref{vda}) becomes tight, which is likely the case for URLLC applications.    
 \subsection{Complexity Analysis of Suboptimal Algorithms}
In this sub-section, we study the complexity of the proposed low-complexity suboptimal schemes. 

\textit{1) Suboptimal Algorithm 1:} Optimization problem (\ref{opf}) is a non-linear convex problem which can be solved efficiently in polynomial time using e.g., CVX \cite{cvx}. There are in total $3KM^{u}N^{u}+3KM^{d}N^{d}+4K$ optimization variables and $M^{u}N^{u}(2+7K)+M^{d}N^{d}(2+7K)+8K+KM^{u}M^{d}\bar{O}$ linear and convex constraints. Thus, the computational complexity order of Algorithm \ref{sca1} per iteration is $\mathcal{O}\big((3KM^{u}N^{u}+3KM^{d}N^{d}+4K)^{3}(M^{u}N^{u}(M^{u}N^{u}(2+7K)+M^{d}N^{d}(2+7K)+8K+KM^{u}M^{d}\bar{O})\big)$\cite{ataj2,Ben1,Pólik2010}.

\textit{2) Suboptimal Algorithm 2:} Optimization problem (\ref{oaa2}) is also a non-linear convex problem, which can be solved efficiently in polynomial time using e.g., CVX \cite{cvx}. There are in total $2KM^{u}N^{u}+2KM^{d}N^{d}+6K$ optimization variables and $M^{u}N^{u}(2+2K)+M^{d}N^{d}(2+2K)+8K+KM^{u}M^{d}\bar{O}$ linear and convex constraints. Thus, the computational complexity order of Algorithm \ref{sca1} per iteration is $\mathcal{O}\big((2KM^{u}N^{u}+2KM^{d}N^{d}+6K)^{3}(M^{u}N^{u}(2+2K)+M^{d}N^{d}(2+2K)+8K+KM^{u}M^{d}\bar{O})\big)$\cite{ataj2,Ben1,Pólik2010}. 
As can be observed, the complexity of the proposed suboptimal scheme 2 is lower than that of the proposed suboptimal scheme 1. The reason behind this is the smaller number of optimization variables and  constraints in (\ref{oaa2}) compared to (\ref{opf}). 
\begin{algorithm}[t]	
	\begin{algorithmic}[1]
	\STATE {Initialize:} Random initial points $\boldsymbol{\alpha}^{(1)},$ $\mathbf{s}^{u(1)},$ $\mathbf{s}^{d(1)}$. Set iteration index $i=1$, maximum number of iterations $I_{\text{max}}$, and penalty factors, $\eta_{1} >0$, $\eta_{2} >0$, and $\eta_{3} >0$.\\
	\STATE \textbf{Repeat}\\
	\STATE Solve convex problem (\ref{oaa2}) for given   $\boldsymbol{\alpha}^{(i)},$ $\mathbf{s}^{u(i)},$ $\mathbf{s}^{d(i)},$ and store the intermediate solutions  $\boldsymbol{\alpha}$, $\mathbf{s}^{u}$, $\mathbf{s}^{d}$.\\
	\STATE Set ${i}={i}+1$ and update  $\boldsymbol{\alpha}^{(i)}=\boldsymbol{\alpha}$, $\mathbf{s}^{u(i)}=\mathbf{s}^{u}$, $\mathbf{s}^{d(i)}=\mathbf{s}^{d}$. \\
	\STATE \textbf{Until} $i=I_{\text{max}}$.\\
	\STATE {Output:} $\boldsymbol{\alpha}^{*}=\boldsymbol{\alpha}$, $\mathbf{s}^{u*}=\mathbf{s}^{u}$,
	$\mathbf{s}^{d*}=\mathbf{s}^{d}$,
	$\mathbf{\tilde{p}}^{u*}=\mathbf{\tilde{p}}^{u}$,
	$\mathbf{\tilde{p}}^{d*}=\mathbf{\tilde{p}}^{d}$.
\end{algorithmic} 
	\caption{Successive Convex Approximation}
	\label{sca2}
\end{algorithm} 
\section{Performance Evaluation}
In this section, we provide simulation results to evaluate the performance of the proposed joint uplink-downlink resource allocation algorithm for OFDMA-URLLC MEC systems. We adopt the simulation parameters provided in Table I, unless specified otherwise. In our simulations, a single cell is considered with inner and outer radii of $r_{1}$ and $r_{2}$, respectively. The BS is located at the center of the cell, and the URLLC users are randomly located between the inner and the outer radii. The path loss is calculated as  $35.3 + 37.6 \log_{10}(d_{k})$\cite{chsecross}, where $d_{k}$ is the distance from the BS to user $k$. The values of the penalty factors used in \textbf{Algorithm} 2 and 3 are set to $\eta_{1}=10KP_{k, \text{max}}$  and $\eta_{2}=\eta_{3}=10P_{\text{max}}$. The small scale fading gains between the BS and the URLLC users are modeled as independent and identically Rayleigh distributed. All simulation results are averaged over $100$ realizations of the path loss and multipath fading.
\begin{table}[t]\vspace{-0.5cm}
	\label{tab:table}
	\centering
	\caption{Simulation Parameters Settings.} 
	\renewcommand{\arraystretch}{1.4}
	\scalebox{0.6}{%
		\begin{tabular}{|c||c|c||c|}
			\hline
			Parameter & Value & Parameter & Value\\ \hline \hline 
			Total number  of sub-carriers in uplink and downlink $M=M^{u}=M^{d}$ & $M_{T}=2M$=64 	&		Number  of time slots in uplink and downlink $N^{u}=N^{d}$ & 4\\ \hline
			Bandwidth of each sub-carrier & 30 kHz & Noise power density  & -174 dBm/Hz  \\ \hline
			Maximum BS transmit power, $P_{\text{max}}$  &  $45$~dBm & Maximum transmit power of each user, $P_{k, \text{max}}$  &  $25$~dBm \\ \hline 
		Packet error probability 	& $\epsilon^{j}_{k}=10^{-6}, \forall j, k$& Circuit power consumption of user $k$, $P_{\text{cir}}$& $50$~\textrm{mW}\cite{user_centric}\\ \hline
			Value of $\Gamma_{k},\forall k$ &  1	& Effective switched capacitance $\kappa$&  $10^{-27}$ Farad 					\\ \hline 			
			Required CPU cycles for processing one bit of 
			information $c_{k}$ &  $[100-5000]$ cycles/bit \cite{miettinen2010energy} & 			Maximum CPU frequency of local user processor $f_{\text{max}}$&  $2.7$ GHz				\\  \hline
			Power amplifier inefficiency of the users and the BS &  $\delta_{k}=1, \forall k,$ and $\delta_{\textrm{BS}}=1$ &  Users weights &  $w_{k}=1,\forall k$  				\\  \hline    			
	\end{tabular}}
\end{table} 
%\begin{table}[t]
%	\label{tab:table}
%	\centering
%	\caption{Simulation Parameters Settings.} 
%	\renewcommand{\arraystretch}{1.4}
%	\scalebox{0.5}{%
%		\begin{tabular}{|c||c|}
%			\hline
%			Parameter & Value \\ \hline \hline 
%			Total number  of sub-carriers in uplink and downlink $M=M^{u}=M^{d}$ & $M_{T}=2M$=64 \\ \hline
%			Number  of time slots in uplink and downlink $N^{u}=N^{d}$ & 4 \\ \hline
%		    Bandwidth of each sub-carrier & 30 kHz \\ \hline
%			Noise power density  & -174 dBm/Hz \\ \hline
%			Maximum BS transmit power, $P_{\text{max}}$  &  $45$~dBm \\ \hline  
%		Maximum transmit power of each user, $P_{k, \text{max}}$  &  $25$~dBm \\ \hline  
%	    Dynamic power consumption of the BS& $0.1$~\textrm{mW}\cite{powermodel}\\ \hline
%	  Static circuit power consumption of user $k$& $50$~\textrm{mW}\cite{powermodel}\\ \hline
%			Value of $\Gamma_{k},\forall k$ &  1						\\ \hline 	Effective switched capacitance $\kappa$&  $10^{-27}$						\\ \hline 			
%			Required CPU cycles for processing one bit of 
%			information $c_{k}$ &  $[100-5000]$ cycles/bit \cite{miettinen2010energy}						\\ \hline
%			Maximum CPU frequency of local user processor $f_{k}$&  $2.7$ GHz						\\  	 \hline  			
%			Packet error probability 
%			 &  $\epsilon^{j}_{k}=10^{-6}, \forall j, k$ 					\\ \hline 			
%	\end{tabular}}
%\end{table} 
\subsection{Performance Bound and Benchmark Schemes}
We compare the performance of the proposed resource allocation algorithms with the following schemes:
\begin{itemize}
	\setlength{\itemsep}{1pt}
\item {\textbf{Shannon's capacity (SC)}}: To obtain an (unachievable) lower bound on the total network power consumption, Shannon's capacity formula is adopted in problem (\ref{Op1}), i.e., $V^{j}_{k}(\mathbf{s}_{k}^{j},\mathbf{p}_{k}^{j}), \forall j,$ is set to zero in constraints $\mbox {C1}$ and $\mbox {C2}$, respectively, and all other constraints are retained. The resulting optimization problem is solved using a modified version of \textbf{Algorithm} 2.
		\setlength{\itemsep}{1pt}
	\item {\textbf{Local computation (LC)}}: In this scheme, only local computation is employed where each user aims to minimize its local computation power by optimizing its own CPU frequency  subject to its delay constraint. The resulting optimization problem is convex and can be solved optimally using convex optimization tools such as CVX\cite{cvx}. 
\item {\textbf{Edge Only (EO)}}: In this scheme, all URLLC users offload their data to the edge server. The resulting optimization problem is solved using the SCA based algorithm from the conference version \cite{gha4}.
\item {\textbf{Fixed sub-carrier assignment (FSA)}}: In this scheme, we fix the sub-carrier assignment for offloading and optimize the remaining degrees of freedom via SCA. We divide the total number of sub-carriers among the users such that their delay and causality constraints are met. This can be done by solving a mixed integer feasibility problem. 
\end{itemize}

\subsection{Simulation Results}
In Figs.~\ref{conv1} and \ref{conv2}, we investigate the convergence of the proposed optimal algorithm (\textbf{Algorithm} 1) and the suboptimal algorithms (\textbf{Algorithms} 2 and 3) for different numbers of sub-carriers $M^{u}$, $M^{d}$, and different numbers of users $K$ for a given channel realization. We show the total sum power consumption as a function of the number of iterations. As can be observed from Fig.~\ref{conv1}, the proposed optimal scheme converges to the global optimal solution after a finite number of iterations. In particular, the optimal scheme converges after $100000$ and $170000$ iterations for $M_{T}=24$ and $M_{T}=32$, respectively. For the proposed optimal scheme, the number of iterations required for convergence increases significantly with the number of sub-carriers since increasing the number of sub-carriers increases the dimensionality of the search space. On the other hand, the proposed suboptimal scheme 1 (\textbf{Algorithm} 2) attains a close-to-optimal performance for a much smaller number of iterations. We note that optimization problem (\ref{Op3c}) has to be solved $2^{K}$ times to find the global optimal solution, see Section IV.B. We show in Fig.~\ref{conv1} the solution for the best $\bar{\boldsymbol{\alpha}}$.     

\begin{figure*}[!tbp]
	\centering
	\begin{minipage}{0.48\textwidth}
	\hspace{-0.3cm}	
	\resizebox{1\linewidth}{!}{\psfragfig{con_glob}}\vspace{-4mm}
	\caption{Convergence of the proposed optimal scheme (\textbf{Algorithm} 1) and suboptimal scheme 1 (\textbf{Algorithm} 2). $M_{T}=M^{u}+M^{d},$ $K=2$, $N_{T}=N^{u}+N^{d}=2$, $B_{k}=80~\textrm{bits},\forall k$, $r_{1}=r_{2}=10~\textrm{m}$, $\tau=1$, $D_{1}=D_{2}=2$, and $c_{k}=~5000$, $\epsilon^{j}_{k}=10^{-3}, \forall j, k.$}
	\label{conv1}
\end{minipage}
	\hfill
	%%%%%%%%%%%%%%%%%%%%%%%%%%%%%%%%%%%%%%
	\begin{minipage}{0.48\textwidth}\vspace{-0.4cm}	
		\hspace{-0.3cm}	
		\resizebox{1\linewidth}{!}{\psfragfig{co}}\vspace{-4mm}
		\caption{ Convergence of the proposed suboptimal  schemes. $\tau =3,$ $B_{k}=50~\textrm{Bytes},\forall k$, $r_{1}=r_{2}=50~\textrm{m}$, $D_{1}=D_{2}=5$,  $D_{k}=7, \forall k \neq \{1,2\}$, and $c_{k}=1000, \forall k$.}
		\label{conv2}
	\end{minipage}\vspace{-0.5cm}
\end{figure*}

In Fig. \ref{conv1}, we chose relatively small values for $M^{j}, \forall j,$ $N^{j}, \forall j,$ and $K$ since the complexity of the optimal algorithm increases rapidly with the dimensionality of the problem. In Fig. \ref{conv2}, we investigate the convergence behavior of the proposed suboptimal schemes for larger values of these parameters. As can be observed from Fig. \ref{conv2}, for all considered combinations of parameter values, the proposed suboptimal schemes require a small number of iterations to converge. In particular, the proposed suboptimal scheme 1 requires at most $4$ iterations to converge while the proposed suboptimal scheme 2 requires only $2$ iterations. The reason for the faster convergence of the suboptimal scheme 2 is the convexity of the feasible set of the underlying optimization problem (\ref{oaa2}), while for suboptimal scheme 1, the feasible set of the corresponding optimization problem (\ref{opf}) is an approximated convex set, and thus, the algorithm requires more iteration to converge. On the other hand, suboptimal scheme 2 causes a higher power consumption compared to suboptimal scheme 1. The higher power consumption is caused by the approximation of channel dispersion in (\ref{vda}) used for derivation of suboptimal scheme 2  which yields an upper bound on the achievable power consumption. As expected, the convergence speeds
of the proposed suboptimal schemes are less sensitive to
the problem size and the number of users compared to that
of the optimal scheme as they avoid the costly branching operation of branch-and-bound type algorithms.

%. For the suboptimal schemes, the optimization problem is reformulated
%as DC programming problems. Then, Taylor
%series approximations are used to convexify the non-convex
%parts of the problem. By iteratively solving the approximated
%problem, a locally optimal point can be found which yields an
%objective value that is close to the globally optimum one. Moreover, as can be seen from In Fig. \ref{conv2}, the proposed suboptimal algorithms offer different trade-offs between complexity and performance.

In Figs.~\ref{bits} and \ref{bits1}, we investigate the average system power consumption versus the task size of the URLLC users. As expected, increasing the required number of computed bits leads to higher power consumption. This is due to the fact that if more bits are to be transmitted or computed in a given frame, higher SNRs or high CPU frequencies are needed, and thus, the BS and the users have to increase their powers.

 In Fig.~\ref{bits}, we compare the performance of the proposed schemes with SC. SC provides a lower bound for the required power consumption of OFDMA-URLLC MEC systems. However, SC cannot guarantee the required latency and reliability. This is due to the fact that, in this scheme, the performance loss incurred by FBT is not taken into account for resource allocation design, and thus the obtained resource allocation policies may not meet the QoS constraints.   As can be seen, the proposed suboptimal schemes attain a close-to-optimal performance. Thereby, suboptimal scheme 1 achieves a lower average system power consumption than suboptimal scheme 2 since the latter approximates the dispersion as in (\ref{vda}). On the other hand, as pointed out in Section V.C, suboptimal scheme 2 entails a low computational complexity. Hence, the proposed suboptimal schemes offer different trade-offs between performance and complexity.

 In Fig.~\ref{bits}, we chose relatively small values for $K$, $M^{u}, M^{d},  N^{u}$, and $N^{d}$ since the complexity of optimal Algorithm 1 increases
rapidly with the dimensionality of the problem, cf. Section
IV.D. In Fig.~\ref{bits1}, we investigate the performance of
the proposed suboptimal schemes for larger values of these
parameters. As can be seen, the proposed schemes lead to a substantially lower power consumption compared to the  FSA, LC, and EO schemes.  For the FSA scheme, the poor performance is due to the smaller number of degrees of freedom for resource allocation as this scheme uses a fixed sub-carrier allocation. For the LC scheme, the performance degradation is caused by the limited computation capability of the URLLC users' CPUs. Moreover, for LC scheme, the local computation is not feasible if the task size exceeds a given value. This is due to the restriction imposed by the maximum CPU frequency $f_{\text{max}}$. The proposed schemes also attain large power savings compared to the EO scheme. This is due to the joint optimization of local and edge computing, while for the EO scheme only offloading is considered.  

Moreover, as can also be seen from Fig.~\ref{bits1}, for small task sizes, suboptimal scheme 1 causes a lower power consumption than suboptimal scheme 2. This is due to the fact that for small task sizes, the users and the BS transmit with low powers leading low SNRs. In this case, the approximation in (\ref{vda}) which exploited for suboptimal scheme 2 is not accurate. On the other hand, large task sizes force the users and the BS to transmit with high power resulting in high SNRs such that the approximation becomes accurate and both suboptimal schemes have a similar performance.

%In Fig.~\ref{Pe}, we show the average system power consumption versus the packet error probability for different resource allocation schemes. As can be observed, for the proposed schemes, the FSA, and the edge only schemes the average system power consumption is a monotonically decreasing function of the packet error probability. This is due to the fact that the complementary error function in the normal approximation is a monotonically decreasing function of $\epsilon$, and as a result, the impact of the dispersion part in the normal approximation decreases as $\epsilon$ increases. Moreover, as can be seen, for the SC, the power consumption is independent of the packet error probability. This is due to the fact that SC assumes that the decoding error probability is zero. Moreover, the gap between the proposed scheme and SC is the price to be paid for enforcing strict delay and reliability requirements to ensure URLLC. Furthermore, as can be  seen, for the LC, the power consumption is also independent of the packet error probability. This is due to the fact that LC performs the tasks locally and no offloading is considered. 
\begin{figure*}[!tbp]
	\centering
	\begin{minipage}{0.48\textwidth}
		\hspace{-0.3cm}	
		\resizebox{1\linewidth}{!}{\psfragfig{bits_global}}\vspace{-4mm}
		\caption{Average system power consumption in [dBm] vs. task size in [bits] for the proposed resource allocation schemes and SC. $r_{1}=r_{2}=50~\textrm{m}$, $M^{u}=M^{d}=12$, $N^{u}=N^{d}=2$, $\tau =2,$ $K=2$, $D_{1}=D_{2}=4$, and $c_{1}~=c_{2}~=5000$.}
		\label{bits}
	\end{minipage}
	\hfill
	%%%%%%%%%%%%%%%%%%%%%%%%%%%%%%%%%%%%%%
		\begin{minipage}{0.48\textwidth}
		\hspace{-0.3cm}	
		\resizebox{1\linewidth}{!}{\psfragfig{bits_f}}\vspace{-4mm}
		\caption{Average system power consumption in [dBm] vs. task size in [bits] for different resource allocation schemes. $r_{1}=r_{2}=75~\textrm{m}$, $\tau =3,$ $K=4$, $D_{1}=D_{2}=5$, $D_{3}~=D_{4}~=7$, $c_{1}=c_{3}=330$, and $c_{2}~=c_{4}~=1500$.}
		\label{bits1}
	\end{minipage}\vspace{-0.75cm}
\end{figure*}
\begin{figure*}[!tbp]
	\centering	
	\begin{minipage}{0.48\textwidth}
		\hspace{-0.3cm}	
		\resizebox{1\linewidth}{!}{\psfragfig{rad2}}\vspace{-4mm}
		\caption{Average system power consumption in [dBm] vs. cell size in meters
			for different resource allocation schemes. $\tau =3,$ $K=4$, $D_{1}=D_{2}=5$, and $D_{3}~=D_{4}~=7$, $B_{k}=160~\text{bits}, \forall k$, $c_{1}=c_{3}=330$, $c_{2}~=c_{4}~=1000$, and $r_{1}=20~\textrm{m}$.}
		\label{cell_raduis}
	\end{minipage}	
	\hfill
	\begin{minipage}{0.48\textwidth}	\vspace{-0.6cm}	
		\hspace{-0.3cm}	
		\resizebox{1\linewidth}{!}{\psfragfig{pro}}\vspace{-4mm}
		\caption{Offloading probability vs. cell size in meters. $\tau =3,$ $K=4$, $D_{k}=7, \forall k$, $B_{k}=360~\text{bits}, \forall k$, and $r_{1}=20~\textrm{m}$.}
		\label{cell_raduispro}
	\end{minipage}\vspace{-0.75cm}
\end{figure*}
\begin{figure*}[!tbp]
	\centering		
		%%%%%%%%%%%%%%%%%%%%%%%%%%%%%%%%%%%%%%
%	\begin{minipage}{0.48\textwidth}
%		\hspace{-0.3cm}	
%		\resizebox{1\linewidth}{!}{\psfragfig{Codes/packeterror/fig/Pe}}\vspace{-4mm}
%		\caption{Average consumed power [dBm] vs. packet error probability, $r_{1}=r_{2}=75~\textrm{m}$, $\tau =3,$ $K=4$, $D_{1}=D_{2}=5$, and $D_{3}~=D_{4}~=7$, $B_{k}=160~\text{bits}, \forall k$, $c_{1}=c_{3}=330$, and $c_{2}~=c_{4}~=1500$.} 
%		\label{Pe}
%	\end{minipage}
		\begin{minipage}{0.48\textwidth}
	\hspace{-0.3cm}	
	\resizebox{1\linewidth}{!}{\psfragfig{delay}}\vspace{-4mm}
	\caption{Average consumed power [dBm] vs. delay in time slots and different delay scenarios. $r_{1}=r_{2}=75~\textrm{m}$, $\tau =2,$ $K=4$, $c_{1}=c_{3}=500$, and $c_{2}~=c_{4}~=1000$. $B_{k}=160~\text{bits}, \forall k$.} 
	\label{Delay}
\end{minipage}
\hfill
	\begin{minipage}{0.48\textwidth}
		\hspace{-0.3cm}	
	\resizebox{1\linewidth}{!}{\psfragfig{tau}}\vspace{-4mm}
\caption{Impact of $\tau$ on the average system power consumption for different resource allocation schemes. $K=4$, $r_{1}=r_{2}=75~\textrm{m}$,  $D_{k}=\tau+N^{d}, \forall k$, $B_{k}=160~\text{bits}, \forall k$, $c_{k}=1500, \forall k$.}
		\label{tau}
	\end{minipage}
%\begin{minipage}[!tbp]
%	\centering	
%	\hspace{-0.3cm}	
%	\resizebox{0.44\linewidth}{!}{\psfragfig{Codes/Tau/tau}}\vspace{-4mm}
%	\caption{Impact of $\tau$ on the average system power consumption for different resource allocation schemes. $K=4$, $r_{1}=r_{2}=75~\textrm{m}$,  $D_{k}=\tau+N^{d}, \forall k$, $B_{k}=160~\text{bits}, \forall k$, $c_{k}=1500, \forall k$.}
%	\label{tau}
%\end{minipage}
	\vspace{-0.75cm}
\end{figure*}
  
In Fig.~\ref{cell_raduis}, we study the impact of the outer cell radius on the average system power consumption for different resource allocation schemes. As can be observed, increasing the outer cell radius increases the average system power consumption. This is due to the fact that the path loss increases with the distance, and as a result, more power is needed to maintain the same SNR for larger distances. For small outer radii, the performance of the proposed scheme is close to that of the EO scheme, as in this case, the proposed scheme is likely to offload the tasks of the users to the edge server because of the low transmission power needed. However, as the outer cell radius increases, the path loss increases, and thus the local users are more likely to compute the computation tasks locally to reduce power consumption. In this case, the performance of the proposed scheme
approaches that of the LC scheme. Fig.~\ref{cell_raduis} also shows the impact of $\Gamma=\Gamma_{k}, \forall k,$ on the system power consumption. As can be seen, the total system power consumption  is higher for larger $\Gamma$. This is due to the fact that as $\Gamma$ increases, the  size of the computation results to be transmitted in the downlink increases, and the BS has to allocate more power to satisfy the QoS constraint in the downlink.

In Fig.~\ref{cell_raduispro}, we investigate the impact of the outer cell radius on the offloading probability for the proposed low-complexity scheme 1 and SC for different values of $c=c_{k}, \forall k,$ and $\Gamma=\Gamma_{k}, \forall k$. As can be seen, increasing the outer cell radius reduces the probability of offloading. This is due to the fact that more power is needed to combat the path loss for larger distances, and thus, the users prefer to compute their tasks locally to reduce the total system power consumption. However, as the task complexity increases, i.e., for large numbers of required cycles $c$, the offloading probability increases. The reason for this behaviour is that as the number of cycles to process one bit increases, the CPU frequency must also increase to process the task within the required latency, and as a result, the local power consumption increases. Fig.~\ref{cell_raduispro} also reveals the impact of $\Gamma$ on the offloading probability. As can be seen, as $\Gamma$ increases, the offloading probability decreases. This is due to the fact that as $\Gamma$ increases, the size of the computed results in the downlink becomes larger, and the BS has to allocate more power to satisfy the QoS constraint in the downlink. In this case, the users are more likely to compute their tasks locally in order to limit the total system power consumption which leads to a lower offloading probability. 
   
In Fig.~\ref{Delay}, we investigate the effect of different delay requirements and consider three delay scenarios. For delay scenario $S_{0}$, all users have the same delay requirements, i.e., $D_{k}=6, \forall k$. For delay scenario $S_{1}$, we have $D_{1}=D,$ and $D_{k}=6, \forall k=\{2,3,4\}$. For delay scenario $S_{2}$, we have $D_{k}=D, \forall k=\{1,2,3\},$ and $D_{4}=6$. In Fig.~\ref{Delay}, we show the average system power consumption versus delay parameter $D$. As can be observed, the average system power consumption decreases with $D$, which is due to the fact that a larger $D$ increases the feasible set of problem (\ref{Op1}) and increases the flexibility of resource allocation. Moreover, the proposed suboptimal scheme attains large power savings compared to the LC scheme, especially, when the users have strict delay requirement. This is due to the limited computation capability of the users.

Fig.~\ref{tau} illustrates the impact of $\tau$ on the average system power consumption for different resource allocation schemes and $N^{u}=N^{d}=4$. As can be seen, the average system power consumption decreases as the value of $\tau$ increases.  The reason for this behaviour is that the number of overlapping time slots $\bar{O}=N^{u}-\tau$ is reduced as $\tau$ increases, and the feasible set of optimization problem (\ref{Op1}) become larger at the expense of an increase in the latency of the users, $D_{k}=\tau+n^{d}$. On the other hand, for small values of $\tau$, the number of overlapping time slots increases, and the total system latency is reduced for all users. This causes the average system power consumption to increase.    
\section{Conclusions}
This paper studied the resource allocation algorithm design for OFDMA-URLLC MEC systems. To ensure the stringent end-to-end transmission delay and reliability requirements of URLLC MEC systems, we proposed a joint uplink-downlink resource allocation scheme that takes into account FBT. Moreover, to minimize the end-to-end delay, we proposed a partial time overlap between the uplink and downlink frames which introduces a new
uplink-downlink causality constraint. The proposed resource
allocation algorithm design was formulated as an optimization problem for minimization of the total weighted transmit power
of the network under QoS constraints regarding the minimum
required number of computed bits of the URLLC users within
a maximum computation time. The resulting optimization problem was shown to be a non-convex mixed-integer problem and hard to solve. Nevertheless, we solved the optimization problem optimally using a branch-and-bound technique based on monotonic optimization theory. Moreover, to strike a balance between computation complexity and performance, we proposed two efficient suboptimal low-complexity schemes based on SCA. Our simulation results showed that the proposed resource allocation algorithm design facilitates the application of URLLC in MEC systems, and achieves significant power savings compared to several benchmark schemes. Moreover, our simulation results showed that the proposed suboptimal algorithms offer different trade-offs between performance and complexity and attained a close-to-optimal performance at comparatively low complexity.
	\section*{Appendix A}
The proof of Lemma \ref{lemmapena} follows similar steps as corresponding proofs in \cite{ghanem1, yan,kwan1}. In the following, we show that problems (\ref{Op2b}) and (\ref{Op3}) are equivalent. Let $U^{*}$ denote the optimal objective value of (\ref{Op3}). We define the Lagrangian function of problem (\ref{Op2b}), denoted by ${\mathcal {L}}(\boldsymbol{f},\bar{\mathbf {p}}^{u},\bar{\mathbf {p}}^{d},\mathbf {s}^{u},\mathbf {s}^{d},\boldsymbol{\alpha},\eta_{1},\eta_{2},\eta_{3})$, as \cite{Boyed}
\begin{align}&\hspace{-5cm}\label{La} {\mathcal {L}}(\boldsymbol{f},\bar{\mathbf {p}}^{u},\bar{\mathbf {p}}^{d},\mathbf {s}^{u},\mathbf {s}^{d},\boldsymbol{\alpha},\eta_{1},\eta_{2},\eta_{3})= \Phi(\boldsymbol{f}, \mathbf{\bar{p}}^{u}, \mathbf{\bar{p}}^{d},\boldsymbol{\alpha})+\eta_{1}(E^{u}(\mathbf{s}^{{u}})-H^{u}(\mathbf{s}^{{u}}))\nonumber\\&+\eta_{2}(E^{d}(\mathbf{s}^{{d}})-H^{d}(\mathbf{s}^{{d}}))+\eta_{3}(E^{\boldsymbol{\alpha}}(\boldsymbol{\alpha})-H^{\boldsymbol{\alpha}}(\boldsymbol{\alpha})),
\end{align}
where $\eta_{1}$, $\eta_{2}$, and $\eta_{3}$ are the Lagrange multipliers corresponding to constraints $\mbox {C6b}$, $\mbox {C8b}$ and $\mbox {C14b}$, respectively. 
Note that $E^{u}(\mathbf {s}^{u})-H^{u}(\mathbf {s}^{u}) \geq 0$, $E^{d}(\mathbf {s}^{d})-H^{d}(\mathbf {s}^{d}) \geq 0$, and $E^{\boldsymbol{\alpha}}(\boldsymbol{\alpha})-H^{\boldsymbol{\alpha}}(\boldsymbol{\alpha}) \geq 0$ hold.
Using Lagrange duality \cite{Boyed,Joinoptimization,ghanem1}, we have the following relation \footnote{Note that weak duality holds for convex and non-convex optimization problems\cite{Boyed}.}
\begin{align}\label{eq28} &U_{d}^{*}=\underset {\eta_{1}, \eta_{2}, \eta_{3} \ge 0}{ \mathop {\mathrm {max}}\nolimits } \quad \underset {\mathbf {p}^{u}, \mathbf {p}^{d},\mathbf {s}^{u},\mathbf {s}^{d}, \bar {\mathbf {p}}^{u},\bar{\mathbf {p}}^{d},\boldsymbol{f},\boldsymbol{\alpha} \in \boldsymbol{\Omega }}{ \mathop {\mathrm {min}}\nolimits } \quad {\mathcal {L}}(\boldsymbol{f},\bar{\mathbf {p}}^{u},\bar{\mathbf {p}}^{d},\mathbf {s}^{u},\mathbf {s}^{d},\boldsymbol{\alpha},\eta_{1},\eta_{2}, \eta_{3}) \IEEEyesnumber \IEEEyessubnumber
\\\overset {(a)}{\le }&\label{eq29}\underset {\mathbf {p}^{u}, \mathbf {p}^{d},\mathbf {s}^{u},\mathbf {s}^{d}, \bar {\mathbf {p}}^{u},\bar{\mathbf {p}}^{d},\boldsymbol{f},\boldsymbol{\alpha} \in \boldsymbol{\Omega }}{ \mathop {\mathrm {min}}\nolimits } \quad \underset {\eta_{1},\eta_{2},\eta_{3} \ge 0}{ \mathop {\mathrm {max}}\nolimits } \quad {\mathcal {L}}(\boldsymbol{f}, \bar{\mathbf {p}}^{u},\bar{\mathbf {p}}^{d},\mathbf {s}^{u},\mathbf {s}^{d},\boldsymbol{\alpha},\eta_{1},\eta_{2}, \eta_{3}) = U^{*},\IEEEyessubnumber
\end{align}
where $\boldsymbol{\Omega }$ is the feasible set specified by the constraints in (\ref{Op2b}).  
In the following, we first prove the strong duality, i.e., $U_{d}^{*}=U^{*}$. 
Let $( \mathbf {p}^{u*}, \mathbf {p}^{d*},\mathbf {s}^{u*},\mathbf {s}^{d*}, \bar {\mathbf {p}}^{u*},\bar{\mathbf {p}}^{d*},\boldsymbol{f}^{*},\boldsymbol{\alpha^{*}},\eta_{1}^{*},\eta_{2}^{*},\eta_{3}^{*})$ denotes the solution of (\ref{eq28}). For this solution, the following two cases are possible. \textit{Case 1)} If $E^{u}(\mathbf {s}^{u})-H^{u}(\mathbf {s}^{u})>0$, $E^{d}(\mathbf {s}^{d})-H^{d}(\mathbf {s}^{d})>0$, and $E^{\boldsymbol{\alpha}}(\boldsymbol{\alpha})-H^{\boldsymbol{\alpha}}(\boldsymbol{\alpha})>0$ hold, the optimal $\eta_{1}^{*}$, $\eta_{2}^{*}$, and $\eta_{3}^{*}$ are infinite, respectively. Hence, $U_{d}^{*}$ is infinite too, which contradicts the fact that it is upper bounded by a finite-value $U^{*}$.  \textit{Case 2)} If $E^{u}(\mathbf {s}^{u})-H^{u}(\mathbf {s}^{u})=0$, $E^{d}(\mathbf {s}^{d})-H^{d}(\mathbf {s}^{d})=0$, and $E^{\boldsymbol{\alpha}}(\boldsymbol{\alpha})-H^{\boldsymbol{\alpha}}(\boldsymbol{\alpha})=0$ holds, then $(\mathbf {p}^{u*}, \mathbf {p}^{d*},\mathbf {s}^{u*},\mathbf {s}^{d*}, \bar {\mathbf {p}}^{u*},\bar{\mathbf {p}}^{d*},\boldsymbol{f}^{*},\boldsymbol{\alpha^{*}})$ belongs to the feasible set of the original problem (\ref{Op2b}) which implies $U_{d}^{*}=U^{*}$. Hence, strong duality holds, and we can focus on solving the dual problem (\ref{eq28}) instead of the primal problem (\ref{eq29}). 

Next, we show that any $\eta_{1} \geq \eta_{1,0}$, $\eta_{2} \geq \eta_{2,0}$, and $\eta_{3} \geq \eta_{3,0}$ are optimal solutions for dual problem (\ref{eq28}), i.e., $\eta_{1}^{*}$, $\eta_{2}^{*}$ and $\eta_{3}^{*}$, where $\eta_{1,0}$, $\eta_{2,0}$, and $\eta_{3,0}$ are some sufficiently large numbers. To do so, we show that ${\Theta}(\eta_{1},\eta_{2},\eta_{3})\triangleq \underset {\bar{\mathbf {p}}^{u},\bar{\mathbf {p}}^{d},{\mathbf {p}}^{u},{\mathbf {p}}^{u},{\mathbf {s}}^{u},{\mathbf {s}}^{d}, \boldsymbol{f},\boldsymbol{\alpha}, \in \boldsymbol{\Omega }}{ \mathop {\mathrm {min}}\nolimits} \quad {\mathcal {L}}(\boldsymbol{f},\bar{\mathbf {p}}^{u},\bar{\mathbf {p}}^{d},\mathbf {s}^{u},\mathbf {s}^{d},\boldsymbol{\alpha},\eta_{1},\eta_{2},\eta_{3})$ is a monotonically increasing function of $\eta_{1}$, $\eta_{2}$, and $\eta_{3}$. Recall that $E^{u}(\mathbf {s}^{u})-H^{u}(\mathbf {s}^{u}) \geq 0$, $E^{d}(\mathbf {s}^{d})-H^{d}(\mathbf {s}^{d}) \geq 0$, and $E^{\boldsymbol{\alpha}}(\boldsymbol{\alpha})-H^{\boldsymbol{\alpha}}(\boldsymbol{\alpha}) \geq 0$ holds for any given ${\mathbf {p}}^{u},{\mathbf {p}}^{d},{\mathbf {s}}^{u},{\mathbf {s}}^{d},\bar{\mathbf {p}}^{u},\bar{\mathbf {p}}^{d}, \boldsymbol{f}, \boldsymbol{\alpha} \in \boldsymbol{\Omega }$.\newline Therefore, ${\mathcal {L}}(\boldsymbol{f}, \bar{\mathbf {p}}^{u},\bar{\mathbf {p}}^{d},\mathbf {s}^{u},\mathbf {s}^{d},\boldsymbol{\alpha},\eta_{1}(1),\eta_{2}(1),\eta_{3}(1))$ $ \leq {\mathcal {L}}(\boldsymbol{f},\bar{\mathbf {p}}^{u},\bar{\mathbf {p}}^{d},\mathbf {s}^{u},\mathbf {s}^{d},\boldsymbol{\alpha},\eta_{1}(2),\eta_{2}(2),\eta_{3}(2))$ holds for any given $\big(\bar{\mathbf {p}}^{u},\bar{\mathbf {p}}^{d},{\mathbf {p}}^{u},{\mathbf {p}}^{u},{\mathbf {s}}^{u},{\mathbf {s}}^{d},\boldsymbol{f},\boldsymbol{\alpha}\big)\in \boldsymbol{\Omega }$, $0 \leq \eta_{1}(1)\leq \eta_{1}(2)$, $0 \leq \eta_{2}(1)\leq \eta_{2}(2)$, and $0 \leq \eta_{3}(1)\leq \eta_{3}(2)$. This implies that ${\Theta}(\eta_{1}(1),\eta_{2}(1),\eta_{3}(1))$$\leq {\Theta}(\eta_{1}(2),\eta_{2}(2), \eta_{3}(2))$   and that ${\Theta}(\eta_{1},\eta_{2},\eta_{3})$ is monotonically increasing in $\eta_{1}$, $\eta_{2}$, and $\eta_{3}$. Using this result, we can conclude that  
$\Theta (\eta_{1},\eta_{2},\eta_{3}) =U^{*},\, \forall \eta_{1} \ge \eta_{1,0}$, $\eta_{2} \ge  {\eta}_{2,0}$, and $\eta_{3} \ge  {\eta}_{3,0}$.

In summary, due to strong duality, we can use the dual problem  (\ref{Op3}) to find the solution of the primal problem (\ref{Op2b}) and any $\eta_{1} \geq \eta_{1,0}$, $\eta_{2} \geq \eta_{2,0}$, and $\eta_{3} \geq \eta_{3,0}$ are optimal dual variables. These results are concisely given in Lemma \ref{lemmapena} which concludes the proof.
\bibliography{ref}  
\bibliographystyle{IEEEtran}
\end{document}